\documentclass[pra,preprint,amsfonts,amsmath,amssymb,superscriptaddress,showpacs,12pt]{revtex4-1}


\usepackage{graphicx}
\usepackage{bm}
\usepackage[bookmarks=true,colorlinks=true,linkcolor=black,citecolor=black,filecolor=black,menucolor=black,urlcolor=black]{hyperref}
\usepackage{amsthm}
\usepackage{mathrsfs}
\usepackage{comment}
\usepackage{color}
\usepackage{ulem}

\DeclareMathOperator{\tr}{tr}

\newcommand{\leftidx}[3]{%
  {\vphantom{#2}}#1#2#3%
}

\newcommand{\bra}[1]{\langle #1 |}
\newcommand{\ket}[1]{| #1 \rangle}
\newcommand{\lrbra}[1]{\left\langle #1 \right|}
\newcommand{\lrket}[1]{ \left| #1 \right\rangle}
\newcommand{\braket}[2]{\langle #1 | #2 \rangle}
\newcommand{\mean}[1]{\langle #1 \rangle}

\newcommand{\binomial}[2]{\begin{pmatrix} #1 \\ #2 \end{pmatrix}}

\newcommand{\hrho}{\hat{\rho}}
\newcommand{\hsigma}{\hat{\sigma}}
\newcommand{\hA}{\hat{A}}

\newcommand{\hE}{\hat{E}}
\newcommand{\hI}{\hat{I}}
\newcommand{\hM}{\hat{M}}

\newcommand{\hX}{\hat{X}}
\newcommand{\hO}{\hat{O}}

\newcommand{\hU}{\hat{U}}
\newcommand{\ha}{\hat{a}}

\newcommand{\hn}{\hat{n}}

\newcommand{\equref}[1]{Eq.~(\ref{#1})}

\theoremstyle{definition}
\newtheorem{theo}{Theorem}
\newtheorem{lemm}{Lemma}


\begin{document}



\title{
Classicality condition on a system's observable in a quantum measurement 
and relative-entropy conservation law
}

\author{Yui Kuramochi} \author{Masahito Ueda}
\affiliation{Department of Physics, University of Tokyo, 7-3-1 Hongo, Bunkyo-ku, Tokyo 113-0033, Japan}
\date{\today}
\begin{abstract}
We consider the information flow on a system's observable $X$
corresponding to a positive-operator valued measure
under a quantum measurement process $Y$
described by a completely positive instrument
from the viewpoint of the relative entropy.
We establish a sufficient condition for the relative-entropy conservation law
which states that the averaged decrease in the relative entropy of the system's observable $X$
equals the relative entropy of the measurement outcome of $Y$,
i.e. the information gain due to measurement.
This sufficient condition is interpreted as
an assumption of classicality 
in the sense that
there exists 
a sufficient statistic in a joint successive measurement of $Y$ followed by $X$
such that the probability distribution of the statistic coincides with that of
a single measurement of $X$ for the pre-measurement state.
We show that in the case when $X$ is 
a discrete projection-valued measure and $Y$ is discrete,
the classicality condition is equivalent to the relative-entropy conservation
for arbitrary states.
The general theory on the relative-entropy conservation is applied to typical quantum
measurement models,
namely
quantum non-demolition measurement, 
destructive sharp measurements on two-level systems,
a photon counting, 
a quantum counting,
homodyne and heterodyne measurements.
These examples except for the non-demolition and photon-counting measurements
do not satisfy
the known Shannon-entropy conservation law proposed by 
Ban~(M. Ban, J. Phys. A: Math. Gen. \textbf{32}, 1643 (1999)),
implying that our approach based on the relative entropy is applicable to a wider class of 
quantum measurements.
\end{abstract}
\pacs{03.67.-a, 03.65.Ta, 42.50.Lc, 42.50.Ar}

\maketitle

\section{Introduction}
In spite of the inevitable state change by a quantum measurement process,
some quantum measurement models are known to conserve the information
about a system's observable.
Examples of such measurements in optical systems
include the quantum 
non-demolition 
(QND) measurement~\cite{guerlin2007progressive}
and the destructive 
photon-counting
measurement~\cite{doi:10.1080/713820643,0954-8998-1-2-005,PhysRevA.41.4127}
on a single-mode photon number.
In the QND measurement, the number of photons is not destructed
and the classical Bayes rule holds for the photon-number distributions
of pre- and post-measurement states.
On the other hand,  the photon-counting measurement 
is a destructive measurement on the system's photon-number
but we can still construct the photon number distribution
of the pre-measurement state
from the number of counts and the photon number of the post-measurement state.

This kind of information-conserving quantum measurement was discussed by
Ban~\cite{Ban1997209,int.jour.theor.phys.37.2491,0305-4470-32-9-012,0305-4470-32-37-304}
quantitatively in terms of the mutual information 
$I_{\hrho}(X:Y)$ between a system's observable $X$
described by a positive-operator valued measure (POVM)
and the measurement outcome of a completely positive (CP) instrument 
$Y$~\cite{davieslewisBF01647093,Kraus1971311,kraus10.1007/3-540-12732-1,:/content/aip/journal/jmp/25/1/10.1063/1.526000}.
Ban established a condition for $X$ and $Y$ under which
the following Shannon 
entropy~\cite{shannon1948amathematical,*shannon1948bmathematical} 
conservation law holds:
\begin{equation}
	I_{\hrho}(X:Y) = H_{\hrho}(X)  - E_{\hrho}[ H_{\hrho_y}(X)  ],
	\label{ban_conservation}
\end{equation}
where 
$\hrho$ is the pre-measurement state,
$\hrho_y$ is the post-measurement state
conditioned on the measurement outcome $y$,
$E_{\hrho}[\cdot]$ denotes the ensemble average over the measurement outcome $y$
for given $\hrho$,
and
$H_{\hrho}(X)$ is the Shannon entropy computed from
the distribution of $X$ for state $\hrho$.
The left-hand side of \equref{ban_conservation}
is the information gain about the system's observable $X$
which is obtained from the measurement outcome $Y$,
while the right-hand side is a
decrease in the uncertainty about the distribution of $X$ 
due to the state change of the measurement.
The physical meaning of the condition for the Shannon entropy 
conservation~(\ref{ban_conservation})
due to Ban
is, however, not clear. 
There are also measurement models with continuous outcomes 
in which 
information about a system's observable is conserved 
but
the Shannon entropy conservation~(\ref{ban_conservation})
does not hold
due to a strong dependence of the continuous Shannon entropy, 
or differential entropy, on a reference measure of the probability measure.
In this sense, 
it is difficult to regard Eq.~(\ref{ban_conservation}) as the quantitative
expression of the information conservation about $X$.

In this paper, we investigate the information flows of the measured observable
based on the relative entropies~\cite{kullbackleibler1951}
of the measurement process $Y$ 
and the observable $X$.
Operationally the consideration of the relative entropies corresponds to the situation when
the pre-measurement state is assumed to be prepared 
in one of the two candidate states, $\hrho$ or $\hsigma$,
and the observer infers from the measurement outcome $Y$
which state is actually prepared.
This kind of information
is quantified as relative entropy of $Y$ between $\hrho$ and $\hsigma$.
The same consideration applies to $X$ and we can define the relative entropy
of $X$ for candidate states $\hrho$ and $\hsigma$ in a similar manner.
Thus we can compare these relative entropies
as Ban did to the Shannon entropy and mutual 
information~\cite{int.jour.theor.phys.37.2491,0305-4470-32-9-012}.

The primary finding of this paper is 
Theorem~\ref{th_rent_conservation1}
which states that
a kind of classicality condition for $X$ and $Y$
implies 
the relative-entropy conservation law
which states that 
the relative entropy of the measurement outcome $Y$
is equal to the ensemble-averaged decrease in
the relative entropy of the system with respect to
the POVM $X$.
The classicality condition for $X$ and $Y$
assumed in Theorem~\ref{th_rent_conservation1}
can be interpreted as the existence of a sufficient 
statistic~\cite{halmos1949application,kullbackleibler1951} 
in a joint successive measurement of $Y$ followed by $X$
such that the distribution of the statistic coincides with 
that of $X$ for the pre-measurement state.
This condition permits a classical interpretation of the measurement process $Y$
in the sense that there exists a classical model 
that simulates 
the conditional change
of the probability distribution of $X$ 
in the measurement process $Y$
computed from system's density operator.
It is also shown that
the conservation of the relative entropy~(\ref{rent_conservation_gen})
holds in a wider range of quantum measurements than 
the Shannon-entropy conservation law~(\ref{ban_conservation})
since the relative entropy is free from the dependence on the reference measure as
in the Shannon entropy.


This paper is organized as follows.
In Sec.~\ref{sec:gen},
we show
the relative-entropy conservation law as Theorem~\ref{th_rent_conservation1}
under a classicality condition for a system's POVM $X$ 
and a measurement process $Y$.
A special case in which $X$ is projection-valued is
formulated in Theorem~\ref{th_diag_conservation}.
By further assuming the discreteness of both the projection-valued measure $X$
and the measurement outcome of $Y$,
we establish in Theorem~\ref{th_diag_equi} the equivalence
between 
the relative-entropy conservation law for arbitrary candidate states
and the classicality condition assumed in Theorem~\ref{th_diag_conservation},
i.e. the classicality condition is a necessary and sufficient condition 
for the relative-entropy conservation law in this case.
In Sec.~\ref{sec:examples},
we show that typical quantum measurements
satisfy the classicality condition,
which are quantum non-demolition measurements,
destructive sharp measurements on two-level systems,
photon-counting measurement,
quantum-counter measurement,
balanced homodyne measurement,
and heterodyne measurement.
In these examples excepet for the quantum non-demolition
and photon-counting measurements,
we show that
the Shannon-entropy conservation law~(\ref{ban_conservation})
does not hold.
In Sec.~\ref{sec:conclusion}, 
we summarize the main results of this paper.

\section{Relative-entropy conservation law}
\label{sec:gen}
In this section we consider
a quantum system described by a Hilbert space $\mathcal{H}$,
a system's POVM $X$
and measurement process $Y$ described by a CP instrument.
Here we assume that
$X$ is described by a density 
$\{ \hE^X_x \}_{x \in \Omega_X}$
of POVM with respect to a reference measure $\nu_0 (dx)$
and that $Y$ is described by a density of CP instrument 
$\{ \mathcal{E}^Y_y \}_{y \in \Omega_Y}$ with respect to 
a reference measure $\mu_0 (dy)$.
The probability densities for the measurement outcomes for $X$ and $Y$
for a given density operator $\hrho$
are given by
\begin{gather}
	p^X_{\hrho} (x) 
	= 
	\tr[\hrho \hE_x^X]
	\notag 
	\intertext{and}
	p^Y_{\hrho} (y)
	= \tr [\mathcal{E}^Y_y (\hrho) ]
	= \tr[ \hrho \hE^Y_y ],
	\label{gen_eydef}
\end{gather}
respectively,
where $\hE^Y_y = {\mathcal{E}^Y_y}^\dagger (\hI)$ is the density 
of the POVM
for the measurement outcome $y$, $\hI$ is the identity operator, 
and the adjoint $\mathcal{E}^\dagger$ of a superoperator $\mathcal{E}$ is defined by
$ \tr [  \hrho \mathcal{E}^\dagger (\hA) ] := \tr [ \mathcal{E} (\hrho) \hA  ]   $ 
for arbitrary $\hrho$ and $\hA$.
The post-measurement state for a given measurement outcome $y$ of $Y$ is given by
\begin{align}
	\hrho_y 
	=
	\frac{ \mathcal{E}^Y_y (\hrho)   }{   P^Y_{\hrho}(y)   } .
	\label{ypost}
\end{align}

The densities of POVMs $\hE^X_x$ and $\hE^Y_y$ satisfy the following completeness conditions:
\begin{align}
	\int \mu_0 (dy) \hE^Y_y &= \hI,
	\label{y_completeness}
	\\
	\int \nu_0(dx) \hE_x^X &= \hI .
	\label{xcompleteness}
\end{align}

As the information content of the measurement outcome,
we consider the relative entropies of the measurement outcomes for
$X$ and $Y$ given by
\begin{align}
	D_X(\hrho||\hsigma) 
	&:=
	D(p^X_{\hrho} || p^X_{\hsigma})
	\notag \\
	&= \int \nu_0(dx) 
	p^X_{\hrho} (x)
	\ln \left(
	\frac{ p^X_{\hrho} (x) }{ p^X_{\hsigma} (x) }
	\right),
	\label{xrentdef}
\end{align}
and
\begin{align}
	D(p^Y_{\hrho}||p^Y_{\hsigma}) 
	= \int \mu_0 (dy) p^Y_{\hrho} (y)
	\ln \left(
	\frac{ p^Y_{\hrho} (y) }{ p^Y_{\hsigma} (y) }
	\right)
	\label{yrentdef}
\end{align}
respectively.
The relative entropies in Eqs.~(\ref{xrentdef}) and (\ref{yrentdef})
are information contents obtained from the measurement outcomes
as to which state $\hrho$ or $\hsigma$ is initially prepared.

The main goal of the present work is to establish a condition
for $X$ and $Y$ such that the relative-entropy conservation law
\begin{equation}
	D(p^Y_{\hrho}||p^Y_{\hsigma})
	=
	D (p^X_{\hrho} || p^X_{\hsigma}) 
	- E_{\hrho}[  D (p^X_{\hrho_y} || p^X_{\hsigma_y})  ],
	\label{rent_conservation_gen}
\end{equation}
holds.
Before discussing the condition for $X$ and $Y$ 
we rewrite \equref{rent_conservation_gen} in a more tractable form
as in the following lemma.
\begin{lemm}
\label{lem:eq}
Let $\{ \hE^X_x \}_{x \in \Omega_X}$
be a density of POVM with respect to a reference measure $\nu_0(dx)$
and let $\{ \mathcal{E}^Y_y \}_{y\in \Omega_Y}$ be a density of CP instrument
with respect to a reference measure $\mu_0 (dy)$.
Then the relative-entropy conservation law~(\ref{rent_conservation_gen})
is equivalent to
\begin{equation}
	D(\tilde{p}^{XY}_{\hrho} || \tilde{p}^{XY}_{\hsigma})
	=
	D(p^X_{\hrho}||p^X_{\hsigma}),
	\label{rent_conservation2}
\end{equation}
where $\tilde{p}^{XY} (x,y)$ is the probability distribution for
a successive joint measurement of $Y$ followed by $X$.
\end{lemm}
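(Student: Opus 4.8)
The plan is to reduce the claim to the chain rule for the Kullback--Leibler divergence. First I would write down the joint density $\tilde{p}^{XY}_{\hrho}(x,y)$ for the successive measurement of $Y$ followed by $X$. By the definition of the post-measurement state $\hrho_y$ in \equref{ypost}, this density factorizes as
\begin{equation*}
	\tilde{p}^{XY}_{\hrho}(x,y)
	= p^Y_{\hrho}(y)\, p^X_{\hrho_y}(x)
	= \tr[ \hE^X_x \, \mathcal{E}^Y_y(\hrho) ],
\end{equation*}
and likewise for $\hsigma$ with $\hsigma_y := \mathcal{E}^Y_y(\hsigma)/p^Y_{\hsigma}(y)$. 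This is just the statement that the marginal for $Y$ is $p^Y_{\hrho}(y)$ and the conditional for $X$ given $y$ is $p^X_{\hrho_y}(x)$.

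Next I would expand the logarithm in $D(\tilde{p}^{XY}_{\hrho}||\tilde{p}^{XY}_{\hsigma})$ using this factorization, splitting $\ln(\tilde{p}^{XY}_{\hrho}/\tilde{p}^{XY}_{\hsigma})$ into the marginal part $\ln(p^Y_{\hrho}(y)/p^Y_{\hsigma}(y))$ and the conditional part $\ln(p^X_{\hrho_y}(x)/p^X_{\hsigma_y}(x))$. Integrating the marginal part over $x$ against $\nu_0(dx)$ and invoking the completeness condition \equref{xcompleteness}, namely $\int \nu_0(dx)\, p^X_{\hrho_y}(x) = \tr[\hrho_y \hI] = 1$, collapses that contribution to $D(p^Y_{\hrho}||p^Y_{\hsigma})$. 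The conditional part, integrated over $x$ for fixed $y$, yields $p^Y_{\hrho}(y)\, D(p^X_{\hrho_y}||p^X_{\hsigma_y})$, whose $\mu_0(dy)$-integral is precisely the ensemble average $E_{\hrho}[ D(p^X_{\hrho_y}||p^X_{\hsigma_y}) ]$, since that average is by definition weighted by $p^Y_{\hrho}(y)$.

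Assembling these two contributions gives the unconditional identity
\begin{equation*}
	D(\tilde{p}^{XY}_{\hrho}||\tilde{p}^{XY}_{\hsigma})
	= D(p^Y_{\hrho}||p^Y_{\hsigma}) + E_{\hrho}[ D(p^X_{\hrho_y}||p^X_{\hsigma_y}) ].
\end{equation*}
Subtracting the averaged term from both sides shows at once that the relative-entropy conservation law \equref{rent_conservation_gen} holds if and only if $D(\tilde{p}^{XY}_{\hrho}||\tilde{p}^{XY}_{\hsigma}) = D(p^X_{\hrho}||p^X_{\hsigma})$, which is \equref{rent_conservation2}; this establishes both directions of the equivalence simultaneously.

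I do not expect a genuine obstacle here: the content is the chain rule for relative entropy combined with the completeness of $X$. The main care, rather than difficulty, lies in the measure-theoretic bookkeeping---ensuring the relative entropies are well-defined through absolute continuity of the relevant distributions so that the logarithms make sense almost everywhere, and justifying the interchange of the $x$- and $y$-integrations in the chain-rule decomposition by Fubini's theorem. Notably, the weighting of $E_{\hrho}[\cdot]$ by the marginal $p^Y_{\hrho}(y)$ is exactly what makes the conditional term match, so no hypothesis on $X$ or $Y$ beyond the stated completeness relations is required.
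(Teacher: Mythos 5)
Your proposal is correct and follows essentially the same route as the paper: factorize the joint distribution $\tilde{p}^{XY}_{\hrho}(x,y)=p^Y_{\hrho}(y)\,p^X_{\hrho_y}(x)$ via the definition of the post-measurement state, then apply the chain rule for relative entropy to obtain $D(\tilde{p}^{XY}_{\hrho}||\tilde{p}^{XY}_{\hsigma})=D(p^Y_{\hrho}||p^Y_{\hsigma})+E_{\hrho}[D(p^X_{\hrho_y}||p^X_{\hsigma_y})]$, from which the equivalence is immediate. The only cosmetic difference is that you derive the chain rule inline (using the completeness condition for $X$), whereas the paper simply cites it from Cover and Thomas.
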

\begin{proof}
The joint distribution $\tilde{p}^{XY} (x,y)$ and the conditional probability distribution
$\tilde{p}^{X|Y}_{\hrho}(x|y)$ of $X$ under given measurement outcome $y$ are given by
\begin{align}
	\tilde{p}^{XY}_{\hrho} (x,y)
	=
	\tr[ \mathcal{E}^Y_y(\hrho) \hE^X_x  ]
	=
	\tr [ \hrho  {\mathcal{E}^Y_y}^\dagger ( \hE^X_x )  ]
	\notag 
\end{align}
and
\begin{align}
	\tilde{p}^{X|Y}_{\hrho} (x|y)
	:=\frac{ \tilde{p}^{XY}_{\hrho} (x,y)  }{ p^{Y}_{\hrho} (y)  }
	=p^X_{\hrho_y}(x) ,
	\label{tP_cond}
\end{align}
respectively.
In deriving \equref{tP_cond}, we used the fact that
the marginal distribution of $Y$ is given by \equref{gen_eydef}
and the definition of the post-measurement state in \equref{ypost}.
From the chain rule for the classical relative entropy 
(e.g. Chap.~2 of Ref.~\cite{cover2012elements}),
we have
\begin{align}
	D(\tilde{p}^{XY}_{\hrho} || \tilde{p}^{XY}_{\hsigma})
	&= 
	D( p^{Y}_{\hrho} || p^{Y}_{\hsigma})
	+
	E_{\hrho}[ D(\tilde{p}^{X|Y}_{\hrho} (\cdot | y )  || \tilde{p}^{X|Y}_{\hsigma}  (\cdot|y) ) ]
	\notag \\
	&=
	D(p^{Y}_{\hrho} || p^{Y}_{\hsigma})
	+
	E_{\hrho}[ D( p^X_{\hrho_y}  ||  p^X_{\hsigma_y}   ) ],
	\label{chainrule}
\end{align}
where we used \equref{tP_cond} in deriving the second equality.
The equivalence between Eqs.~(\ref{rent_conservation_gen}) and (\ref{rent_conservation2})
is now evident from \equref{chainrule}.
\end{proof}
Equation~(\ref{rent_conservation2}) indicates that
the information about $X$ contained in the original states $\hrho$ and $\hsigma$
is equal to the information obtained from the joint successive measurement of
$Y$ followed by $X$.

Now our first main result is the following theorem
on the relative-entropy conservation law:
\begin{theo}
\label{th_rent_conservation1}
Let $X$ be a density of POVM $\{ \hE^X_x  \}_{x \in \Omega_X}$ 
with respect to a reference measure $\nu_0(dx)$
and let
$Y$ be a density of an instrument
$\{ \mathcal{E}^Y_y \}_{y\in \Omega_Y} $
with respect to a reference measure $\mu_0 (dy)$.
Suppose that
$X$ and $Y$ satisfy the following conditions.
\begin{enumerate}
\item
POVM of $Y$ is the coarse-graining of $X$, 
i.e. there exists a conditional probability $p(y|x) \geq 0$ such that
\begin{equation}
	\hE_y^Y
	=
	\int \nu_0(dx)
	p(y|x) \hE_x^X
	\label{ass_pyx}
\end{equation}
with the normalization condition
\begin{equation}
	\int \mu_0 (dy) p(y|x) = 1.
	\label{norm_pyx}
\end{equation}
\item
There exist functions $\tilde{x} (x;y)$ and $q(x;y) \geq 0$
such that
\begin{align}
	{\mathcal{E}^Y_y}^\dagger (\hE^X_x)
	= q(x;y) \hE_{\tilde{x}(x;y)}^X
	\label{gen_cond}
\end{align}
for any $x$ and $y$.
\item
For any $y$ and any smooth function $F(x)$,
\begin{align}
	\int
	\nu_0 (dx) 
	q(x;y)
	F(\tilde{x} (x;y)) 
	=
	\int
	\nu_0 (dx) 
	p(y| x )
	F(x) .
	\label{gen_cond2}
\end{align}
\end{enumerate}
Then the relative-entropy conservation law~(\ref{rent_conservation_gen}) 
or (\ref{rent_conservation2}) holds.
\end{theo}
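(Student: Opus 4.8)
The plan is to invoke Lemma~\ref{lem:eq} and reduce the claim to \equref{rent_conservation2}, i.e.\ to the identity $D(\tilde{p}^{XY}_{\hrho} || \tilde{p}^{XY}_{\hsigma}) = D(p^X_{\hrho}||p^X_{\hsigma})$. First I would evaluate the joint distribution of the successive measurement using condition~2: substituting \equref{gen_cond} into $\tilde{p}^{XY}_{\hrho}(x,y)=\tr[\hrho\,{\mathcal{E}^Y_y}^\dagger(\hE^X_x)]$ gives
\[
	\tilde{p}^{XY}_{\hrho}(x,y) = q(x;y)\,\tr[\hrho\,\hE^X_{\tilde{x}(x;y)}] = q(x;y)\,p^X_{\hrho}(\tilde{x}(x;y)),
\]
and likewise for $\hsigma$. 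The crucial observation is that the factor $q(x;y)$ is state-independent, so it appears identically in the joint distributions for $\hrho$ and $\hsigma$.

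Next I would insert these expressions into the relative entropy $D(\tilde{p}^{XY}_{\hrho} || \tilde{p}^{XY}_{\hsigma})$. Because $q(x;y)$ is common to numerator and denominator, it cancels inside the logarithm, leaving
\[
	D(\tilde{p}^{XY}_{\hrho} || \tilde{p}^{XY}_{\hsigma}) = \int \mu_0(dy)\,\nu_0(dx)\, q(x;y)\, G(\tilde{x}(x;y)),
\]
where I abbreviate $G(z):=p^X_{\hrho}(z)\ln[p^X_{\hrho}(z)/p^X_{\hsigma}(z)]$. The integrand now depends on $x$ only through the statistic $\tilde{x}(x;y)$, which is precisely the structure that condition~3 is designed to exploit.

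The final step applies condition~3 with the choice $F=G$, converting $\int\nu_0(dx)\,q(x;y)\,G(\tilde{x}(x;y))$ into $\int\nu_0(dx)\,p(y|x)\,G(x)$, and then performs the $y$-integration using the normalization \equref{norm_pyx}, namely $\int\mu_0(dy)\,p(y|x)=1$. This yields $\int\nu_0(dx)\,G(x)=D(p^X_{\hrho}||p^X_{\hsigma})$, which is exactly \equref{rent_conservation2}; Lemma~\ref{lem:eq} then delivers the desired conservation law~(\ref{rent_conservation_gen}).

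The hard part will be the regularity issue in this last step. Condition~3 is stated only for smooth $F$, whereas $G(z)=p^X_{\hrho}(z)\ln[p^X_{\hrho}(z)/p^X_{\hsigma}(z)]$ is in general neither smooth nor bounded, and may diverge where $p^X_{\hsigma}$ vanishes while $p^X_{\hrho}$ does not. I would handle this by restricting attention to candidate states whose $X$-distributions share a common support and approximating $G$ by smooth, compactly supported functions, passing to the limit via dominated or monotone convergence; alternatively one can argue that \equref{gen_cond2} extends from smooth test functions to the relevant Borel-measurable integrands by a standard density argument. It would also remain to confirm measurability of $x \mapsto \tilde{x}(x;y)$ and $x \mapsto q(x;y)$ so that every integral above is well defined.
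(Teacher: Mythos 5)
Your proposal is correct, and it shares the paper's skeleton --- the reduction via Lemma~\ref{lem:eq} and the factorization $\tilde{p}^{XY}_{\hrho}(x,y)=q(x;y)\,p^X_{\hrho}(\tilde{x}(x;y))$ obtained by taking expectations of \equref{gen_cond} --- but it finishes by a genuinely different route. The paper treats that factorization as an instance of the Halmos--Savage factorization theorem, so that $\tilde{x}(x;y)$ is a sufficient statistic of the joint successive measurement; it then shows that the law of $\tilde{x}$ equals $p^X_{\hrho}$ by applying \equref{gen_cond2} to the product $p^X_{\hrho}(x)F(x)$ together with \equref{norm_pyx}, and finally concludes by \emph{citing} the Kullback--Leibler theorem that relative entropy is invariant under reduction to a sufficient statistic. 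You instead prove that invariance inline for this particular model: the state-independent factor $q(x;y)$ cancels inside the logarithm, and \equref{gen_cond2} applied with $F=G$, where $G(z)=p^X_{\hrho}(z)\ln[p^X_{\hrho}(z)/p^X_{\hsigma}(z)]$, followed by the $y$-integration via \equref{norm_pyx}, yields \equref{rent_conservation2} directly. Your version is more elementary and self-contained (nothing external is invoked beyond the chain rule already used in Lemma~\ref{lem:eq}), and it exposes the cancellation mechanism that underlies the sufficiency theorem; the paper's version keeps the sufficiency structure explicit --- which is precisely the ``classicality'' interpretation it wants to advertise --- and has the mild technical advantage that \equref{gen_cond2} is only ever applied to $p^X_{\hrho}(x)F(x)$, so the logarithmic singularities are delegated to the cited Kullback--Leibler theorem rather than fed into the hypothesis. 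Your closing regularity worry is therefore apt but not specific to your argument: the paper, too, silently applies \equref{gen_cond2} beyond smooth functions (its proof says ``for any function $F$''), and your proposed remedy --- truncating the logarithm and working with the nonnegative integrand $p\ln(p/q)-p+q$ so that monotone convergence and Tonelli apply --- is the standard way to make either version airtight.
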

\begin{proof}
We prove \equref{rent_conservation2}.
By taking a quantum expectation of \equref{gen_cond}
with respect to $\hrho$,
we obtain
\begin{align}
	\tilde{p}^{XY}_{\hrho} (x,y)
	=
	q(x;y)
	p^X_{\hrho} 
	(
	\tilde{x} (x;y)
	),
	\label{ch5cond1p}
\end{align}
Equation~(\ref{ch5cond1p}) implies that,
from the factorization theorem for the sufficient statistic~\cite{halmos1949application},
the stochastic variable $\tilde{x} (x;y)$
is a sufficient statistic
of the joint successive measurement of $Y$ followed by $X$.
Let us denote the probability distribution function of $\tilde{x}(x;y)$
with respect to the reference measure $\nu_0$
as $p^{\tilde{X}}_{\hrho} (x)$.
From the definition of 
$p^{\tilde{X}}_{\hrho} (x)$
and the condition~(\ref{gen_cond2}),
for any function $F(x)$
we have
\begin{align}
	\int
	\nu_0(dx)
	p^{\tilde{X}}_{\hrho} (x)
	F(x)
	&=
	\int
	\nu_0(dx)
	\int
	\mu_0(dy)
	\tilde{p}^{XY}_{\hrho} (x,y)
	F(\tilde{x}(x;y))
	\notag
	\\
	&=
	\int
	\mu_0(dy)
	\int
	\nu_0(dx)
	p(y|x)
	p^X_{\hrho} (x)
	F(x)
	\notag 
	\\
	&=
	\int
	\nu_0(dx)
	p^X_{\hrho} (x)
	F(x),
	\notag
\end{align}
which implies that the probability distribution of $\tilde{x} (x;y)$
coincides with that of the single measurement of $X$.
Thus the condition~(\ref{gen_cond2}) ensures
\begin{align}
	p^{\tilde{X}}_{\hrho} (x)
	=
	p^X_{\hrho} (x).
	\label{ch5temp1}
\end{align}
From Eqs.~(\ref{ch5cond1p}) and (\ref{ch5temp1}),
we have
\begin{align}
	D(\tilde{p}^{XY}_{\hrho}  ||  \tilde{p}^{XY}_{\hsigma} )
	=
	D (p^{\tilde{X}}_{\hrho} || p^{\tilde{X}}_{\hsigma} )
	=
	D (p^{X}_{\hrho} || p^{X}_{\hsigma} ),
	\notag
\end{align}
where in deriving the first equality, 
we used the relative entropy conservation for the sufficient statistic
due to Kullback and Leibler~\cite{kullbackleibler1951}.
\end{proof}
The physical meaning of the conditions~(\ref{gen_cond}) and (\ref{gen_cond2})
is clear from Eqs.~(\ref{ch5cond1p}) and (\ref{ch5temp1});
the condition~(\ref{gen_cond}) implies that $\tilde{x} (x;y)$ is a sufficient statistic
for the joint successive measurement of $Y$ followed by $X$
and the condition~(\ref{gen_cond2}) ensures that
the distribution of $\tilde{x} (x;y)$ is equivalent to that
of $X$ for the pre-measurement state.

The assumptions 1, 2, 3 in Theorem~\ref{th_rent_conservation1}
are interpreted as a kind of classicality condition
as the proof uses only the classical probabilities.
In fact, a statistical model 
\begin{equation*}
	\tilde{p}
	(x_{\mathrm{in}} , y , x_{\mathrm{out}})
	=
	\delta_{
	x_{\mathrm{in}} , 
	\tilde{x} 
	( x_\mathrm{out} ; y) 
	}
	q(x_{\mathrm{out}};y) 
	p_{\hrho}^X (x_\mathrm{in})
\end{equation*}
with its sample space
$\Omega_X \times \Omega_Y \times \Omega_X$
reproduces all the probabilities that appear in the proof,
where $x_\mathrm{in}$ and $x_{\mathrm{out}}$ are the system's
values of $X$ before and after the measurement of $Y$, respectively,
and $y$ is the outcome of $Y$.
Here, we assumed the discreteness of $\Omega_X$ for simplicity,
but the same construction still applies to the continuous case.

In Ref.~\cite{0305-4470-32-9-012},
Ban proves the conservation for the Shannon entropy~(\ref{ban_conservation})
by assuming Eqs.~(\ref{ass_pyx}), (\ref{norm_pyx}), (\ref{gen_cond2}) and
\begin{equation}
	{\mathcal{E}^Y_y}^\dagger (\hE^X_x)
	= p(x|\tilde{x} (x;y)) \hE_{\tilde{x}(x;y)}^X
	\label{ban_gen_cond}
\end{equation}
for all $x$ and $y$.
The condition~(\ref{ban_gen_cond}) is stronger than our condition~(\ref{gen_cond})
since $q(x;y)$ is, in general, different from $p(x|\tilde{x} (x;y))$.
In some examples discussed in the next section,
we will show that
condition~(\ref{ban_gen_cond}) together with the Shannon entropy-conservation 
law~(\ref{ban_conservation}) does not hold,
whereas our condition for the relative-entropy conservation law~(\ref{rent_conservation_gen}) does.
This implies that our condition can be applicable to a wider range of quantum measurements.
Furthermore, for the case in which
$X$ is a projection-valued measure
and
labels $x$ and $y$ are both discrete,
we can show that condition~(\ref{ban_gen_cond}) is equivalent to
the condition that the post-measurement state is one of eigenstates of $X$
if the pre-measurement state is also one of them.
(See appendix~\ref{sec:app_ban} for detail).

Now we consider the case in which 
the reference POVM is a projection-valued measure (PVM)
$\hE^X_x$
which satisfies the following orthonormal completeness condition:
\begin{gather}
	\hE^X_x \hE^X_{x^\prime} = \delta_{x,x^\prime} \hE^X_x,
	\quad
	\sum_{x \in \Omega_X} \hE^X_x =\hI 
	\quad \mathrm{for \, discrete \, }x ;
	\label{cons1}
	\\
	\hE^X_x \hE^X_{x^\prime} = \delta(x-x^\prime) \hE^X_x,
	\quad
	\int_{\mathbb{R}}dx \hE^X_x =\hI 
	\quad \mathrm{for \, continuous \, }x,
	\label{cons2}
\end{gather}
where $\delta_{x,x^\prime}$ is the Kronecker delta
and $\delta (x-x^\prime)$ is the Dirac delta function.
If $\hE^X_x$ is written as $\ket{x} \bra{x}$,
the $X$-relative entropy
\begin{equation}
	D_{\mathrm{diag}}(\hrho||\hsigma)
	:=
	\begin{cases}
	\displaystyle
	\sum_{x \in \Omega_X} \bra{x} \hrho \ket{x} 
	\ln \left(
	\frac{ \bra{x} \hrho \ket{x}  }{ \bra{x} \hsigma \ket{x}  }
	\right),
	\\
	\displaystyle
	\int dx \bra{x} \hrho \ket{x} 
	\ln \left(
	\frac{ \bra{x} \hrho \ket{x}  }{ \bra{x} \hsigma \ket{x}  }
	\right),
	\end{cases}
	\notag 
\end{equation}
is called 
the diagonal-relative entropy.
For this reference PVM,
the condition for the relative-entropy conservation law
is relaxed as shown in the following theorem.

\begin{theo} \label{th_diag_conservation}
Let $\{  \mathcal{E}^Y_y \}_{y \in \Omega_Y}$ be a density of an instrument
with respect to a reference measure $\mu_0 (dy)$
and $\hE^X_x $ be a PVM with the 
completeness condition~(\ref{cons1}) or~(\ref{cons2}).
Suppose that
$X$ and $Y$ satisfy the condition~(\ref{gen_cond}) in Theorem~\ref{th_rent_conservation1}.
Then
there exists a unique positive function $p(y|x) $
satisfying Eqs.~(\ref{ass_pyx}) and (\ref{norm_pyx}).
Furthermore the
relative-entropy conservation law in~\equref{rent_conservation_gen} holds.
\end{theo}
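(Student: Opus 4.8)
The plan is to reduce the statement to Theorem~\ref{th_rent_conservation1}: since condition~(\ref{gen_cond}) is assumed, it suffices to show that for a projection-valued reference $X$ this single hypothesis automatically produces a $p(y|x)$ satisfying the remaining conditions~(\ref{ass_pyx}), (\ref{norm_pyx}) and (\ref{gen_cond2}). The structural fact I would exploit is that, because $\{\hE^X_x\}$ is orthogonal and complete, the $Y$-POVM element $\hE^Y_y$ is forced to be \emph{diagonal} in the $X$-basis; this simultaneously pins down $p(y|x)$ uniquely and trivializes the marginal condition~(\ref{gen_cond2}).

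First I would construct $p(y|x)$. Applying ${\mathcal{E}^Y_y}^\dagger$ to the completeness relation~(\ref{cons1}) [or~(\ref{cons2})] and inserting~(\ref{gen_cond}) gives
\[
	\hE^Y_y = {\mathcal{E}^Y_y}^\dagger(\hI) = \int \nu_0(dx)\, q(x;y)\, \hE^X_{\tilde{x}(x;y)} ,
\]
with the integral replaced by $\sum_{x\in\Omega_X}$ in the discrete case. Collecting terms according to the value $x' = \tilde{x}(x;y)$, i.e.\ taking the pushforward of the measure $q(x;y)\,\nu_0(dx)$ along the map $\tilde{x}(\cdot;y)$, I would define $p(y|x')$ as the density of this pushforward with respect to $\nu_0$, so that $\hE^Y_y = \int\nu_0(dx')\,p(y|x')\,\hE^X_{x'}$, which is exactly~(\ref{ass_pyx}). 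Positivity $p(y|x)\geq 0$ is immediate from $q\geq 0$. Uniqueness follows because the $\hE^X_{x'}$ are mutually orthogonal projections: multiplying~(\ref{ass_pyx}) by $\hE^X_x$ yields $\hE^Y_y \hE^X_x = p(y|x)\,\hE^X_x$, which determines $p(y|x)$ pointwise. Finally, taking diagonal $X$-matrix elements of the $Y$-completeness relation~(\ref{y_completeness}) gives $\int\mu_0(dy)\,p(y|x) = 1$, establishing~(\ref{norm_pyx}).

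It then remains only to check the marginal condition~(\ref{gen_cond2}). With $p(y|x)$ defined as the pushforward above, the identity
\[
	\int\nu_0(dx)\, q(x;y)\, F(\tilde{x}(x;y)) = \int\nu_0(dx')\, p(y|x')\, F(x')
\]
holds for every smooth $F$ by the very definition of the pushforward (change of variables), so condition~(\ref{gen_cond2}) is satisfied tautologically. Hence all three hypotheses of Theorem~\ref{th_rent_conservation1} are met, and the relative-entropy conservation law~(\ref{rent_conservation_gen}), equivalently~(\ref{rent_conservation2}), follows at once.

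The main obstacle I anticipate is rigor in the continuous case: the ``diagonal'' of a PVM and the pushforward density of $q(x;y)\,\nu_0(dx)$ must be made precise, which requires measurability of $\tilde{x}(\cdot;y)$ and $q(\cdot;y)$, existence of the pushforward density with respect to $\nu_0$, and careful handling of the generalized orthonormality~(\ref{cons2}) involving Dirac deltas. A secondary point is that for a general, not necessarily rank-one, PVM one must confirm that $\hE^Y_y$ acts as a genuine scalar on the range of each $\hE^X_{x'}$; this is guaranteed because the displayed expansion exhibits $\hE^Y_y$ as a nonnegative combination of mutually orthogonal projections, so $p(y|x)$ is well defined and unique. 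Modulo these measure-theoretic checks, the projection-valued structure does all the work and no entropic estimate beyond Theorem~\ref{th_rent_conservation1} is needed.
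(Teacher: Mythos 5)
Your proof is correct and essentially coincides with the paper's: the paper likewise sums~(\ref{gen_cond}) over $x$, collects terms by the value of $\tilde{x}(x;y)$ to define $p(y|x)=\sum_{x'}\delta_{x,\tilde{x}(x';y)}\,q(x';y)$ (your pushforward density, written in discrete notation), obtains uniqueness and normalization from the linear independence of the orthogonal projections together with the $Y$-completeness relation, and verifies~(\ref{gen_cond2}) by the same change-of-variables identity before invoking Theorem~\ref{th_rent_conservation1}. The only difference is presentational: the paper works out the discrete case with Kronecker deltas and passes to continuous $x$ by formal replacement with Dirac deltas, whereas you phrase the construction in measure-theoretic pushforward language and flag the attendant regularity issues explicitly.
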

\begin{proof}
For simplicity, we only consider the case in which
the label $x$ for the PVM is discrete.
The following proof can easily be generalized to continuous $X$ 
by replacing the sum $\sum_x \cdots$ with the integral $\int dx \cdots$
and the Kronecker delta $\delta_{x,x^\prime}$ with the Dirac delta function
$\delta (x-x^\prime)$.

The summation of \equref{gen_cond}
with respect to $x$ gives
\begin{align}
	\hE^Y_y
	&=
	\sum_{x \in \Omega_X}
	q(x;y) 
	\hE^X_{\tilde{x}(x;y)}
	\notag \\
	&=\sum_{x^\prime \in \Omega_X}
	\left(
	\sum_{x\in \Omega_X} \delta_{x^\prime , \tilde{x}(x;y)} q(x;y)
	\right)
	\hE^X_{x^\prime}.
	\label{pqdelta}
\end{align}
Therefore
\begin{equation}
	p(y|x)
	=
	\sum_{x^\prime  \in \Omega_X} \delta_{x , \tilde{x}(x^\prime;y)} q(x^\prime;y)
	\label{diag_pyx}
\end{equation}
satisfies Eq.~(\ref{ass_pyx}).
The uniqueness and the normalization condition~(\ref{norm_pyx}) for $p(y|x)$ 
follow from
Eq.~(\ref{pqdelta}) and 
the completeness condition~(\ref{y_completeness}) for $\hE^Y_y$
noting that $\{ \hE^X_x \}_{x\in \Omega_X}$ is linearly independent.

Next, we show the relative-entropy conservation law~(\ref{rent_conservation_gen}).
From Theorem~\ref{th_rent_conservation1},
it is sufficient to show 
the condition~(\ref{gen_cond2}).
For an arbitrary function $F(x)$
we have
\begin{align}
	\sum_{x \in \Omega_X} q(x;y) F(\tilde{x}(x;y))
	&=
	\sum_{x^\prime \in \Omega_X} 
	\left(
	\sum_{x\in \Omega_X} 
	\delta_{x^\prime , \tilde{x}(x;y)} q(x;y)
	\right)
	F(x^\prime)
	\notag \\
	&=
	\sum_{x \in \Omega_X} 
	p(y|x) F(x),
	\notag
\end{align}
where we used Eq.~(\ref{diag_pyx}) in the second equality.
Then the condition~(\ref{gen_cond2}) holds.
\end{proof}

Next,
we consider the case in which
$X$ is a discrete PVM $\{  \hE^X_x \}_{x \in \Omega_X}$ 
with the discrete complete orthonormal condition~(\ref{cons1})
and $Y$ is a discrete measurement 
on a sample space $\Omega_Y $
described by a set of CP maps $\{  \mathcal{E}^X_y \}_{y \in \Omega_Y}$
with the completeness condition
\begin{align}
	\sum_{y \in \Omega_Y}
	{ \mathcal{E}^Y_y }^\dagger (\hI)
	=
	\hI .
	\label{ch5completeness}
\end{align}
In this case,
we can show the equivalence between the established condition~(\ref{gen_cond})
in Theorem~\ref{th_diag_conservation} 
and the relative-entropy conservation law~(\ref{rent_conservation_gen}).

\begin{theo}
\label{th_diag_equi}
Let $X$ be a discrete PVM $\{  \hE^X_x  \}_{x \in \Omega_X}$
with a discrete complete orthonormal condition~(\ref{cons1})
and let $Y$ be a quantum measurement 
corresponding to a CP instrument on a discrete sample space 
$\Omega_Y$
described by a set of CP maps
$\{  \mathcal{E}^X_y \}_{y \in \Omega_Y}$
with the completeness condition~(\ref{ch5completeness}).
Then the following two conditions are equivalent:
\begin{enumerate}
\item[(i)]
The condition~(\ref{gen_cond}) holds
for all $x$ and $y$.
\item[(ii)]
The relative-entropy conservation 
law~(\ref{rent_conservation_gen}) 
or (\ref{rent_conservation2}) holds
for arbitrary states $\hrho$ and $\hsigma$.
\end{enumerate}
\end{theo}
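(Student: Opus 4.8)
The plan is to prove the two implications separately, with the forward direction being immediate and the reverse direction carrying all the work. For (i)$\Rightarrow$(ii) I would simply invoke Theorem~\ref{th_diag_conservation}: since $\hE^X_x$ is a discrete PVM satisfying~(\ref{cons1}) and condition~(\ref{gen_cond}) is assumed, that theorem already produces a $p(y|x)$ obeying~(\ref{ass_pyx}) and guarantees the relative-entropy conservation law~(\ref{rent_conservation_gen}) for all states, so nothing new is needed. For (ii)$\Rightarrow$(i) I would first apply Lemma~\ref{lem:eq} to restate the hypothesis as the identity $D(\tilde{p}^{XY}_{\hrho}\|\tilde{p}^{XY}_{\hsigma})=D(p^X_{\hrho}\|p^X_{\hsigma})$ holding for \emph{all} $\hrho,\hsigma$, and abbreviate $\hM_{x,y}:={\mathcal{E}^Y_y}^\dagger(\hE^X_x)$, which is a positive operator because $\mathcal{E}^Y_y$ is CP, so that $\tilde{p}^{XY}_{\hrho}(x,y)=\tr[\hrho\,\hM_{x,y}]$.

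The first step is a coherence argument showing that each $\hM_{x,y}$ is a nonnegative combination of the projectors, $\hM_{x,y}=\sum_{x'\in\Omega_X}c(x'|x,y)\hE^X_{x'}$ with $c(x'|x,y)\ge 0$; for rank-one $\hE^X_x=\ket{x}\bra{x}$ this is just diagonality of $\hM_{x,y}$ in the $\{\ket{x}\}$ basis. The idea is that whenever $\hrho$ and $\hsigma$ have the same $X$-distribution, $p^X_{\hrho}=p^X_{\hsigma}$, the right-hand side vanishes, so the conservation law forces $D(\tilde{p}^{XY}_{\hrho}\|\tilde{p}^{XY}_{\hsigma})=0$ and hence $\tilde{p}^{XY}_{\hrho}=\tilde{p}^{XY}_{\hsigma}$. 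Thus $\tr[\hrho\,\hM_{x,y}]$ can depend on $\hrho$ only through the numbers $\tr[\hrho\,\hE^X_{x'}]$; taking $\hrho,\hsigma$ full-rank with equal PVM components but differing by a traceless Hermitian perturbation $H$ that is off-diagonal with respect to the PVM, and letting $H$ range over all such directions, forces the off-PVM part of $\hM_{x,y}$ to vanish and gives the stated form, with $c\ge 0$ following from positivity of $\hM_{x,y}$.

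Once this is established the problem becomes purely classical. Writing $W((x,y)|x'):=c(x'|x,y)$, the completeness conditions give $\sum_{x,y}\hM_{x,y}=\hI$ and hence $\sum_{x,y}W((x,y)|x')=1$, so $W$ is a legitimate stochastic map from $\Omega_X$ to $\Omega_X\times\Omega_Y$ with $\tilde{p}^{XY}_{\hrho}=W\,p^X_{\hrho}$. Because $\hrho\mapsto p^X_{\hrho}$ is onto the probability simplex on $\Omega_X$, the hypothesis is now equivalent to $D(Wp\|Wp')=D(p\|p')$ for all input distributions $p,p'$, i.e. equality in the data-processing inequality for $W$ over all inputs. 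The target~(\ref{gen_cond}) is precisely the assertion that $W$ is reverse-deterministic: for every output $(x,y)$ there is a single $x'=\tilde{x}(x;y)$ with $c(x'|x,y)>0$, so that $\hM_{x,y}=q(x;y)\hE^X_{\tilde{x}(x;y)}$ with $q(x;y)=\tr[\hM_{x,y}]$.

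I expect this last step to be the main obstacle. The cleanest route I see is a two-point reduction: if some output received positive weight from two distinct inputs $a\neq b$, then restricting $p,p'$ to the pair $\{a,b\}$ exhibits a binary channel whose conditional output distributions $W(\cdot|a)$ and $W(\cdot|b)$ fail to be mutually singular, and for any such binary channel the data-processing inequality for the binary relative entropy is strict for some inputs, contradicting the assumed equality. Proving this binary lemma---that $D(Wp\|Wp')=D(p\|p')$ for all $p,p'$ forces mutually singular conditionals---is the technical crux; I would either derive it from the known equality condition for monotonicity of the relative entropy under stochastic maps (classical sufficiency), or prove it directly from the strict concavity of $t\mapsto t\ln t$ via a second-order expansion around an interior base point. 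Assembling the resulting function $\tilde{x}(x;y)$ and weights $q(x;y)$ then recovers~(\ref{gen_cond}) and closes the equivalence.
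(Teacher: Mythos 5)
Your proposal is correct, and its skeleton matches the paper's (forward direction by Theorem~\ref{th_diag_conservation}; reverse direction by first forcing each $\hM_{x,y}={\mathcal{E}^Y_y}^\dagger(\hE^X_x)$ into the span of the projectors, then applying a classical equality-in-data-processing argument), but both key steps are executed differently from the paper. The paper packages the reverse direction as a standalone statement, Lemma~\ref{lemm:equi}, about an arbitrary discrete POVM $\{\hE^Z_z\}$ whose outcome relative entropy always equals that of the PVM: diagonality is obtained there by averaging over arbitrary block unitaries $\hU_x$ (a channel $\mathcal{F}$ with $p^X_{\hrho}=p^X_{\mathcal{F}(\hrho)}$, whence $\hE^Z_z=\mathcal{F}^\dagger(\hE^Z_z)$, followed by a commutant argument inside each block), whereas you get it by a perturbation/spanning argument from the observation that equal $X$-distributions force equal joint distributions --- the same insight, executed with different technique. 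For the crux, the paper forms the joint POVM $\hE^{XZ}_{xz}=\kappa(z|x)\hE^X_x$, notes $X$ is trivially sufficient for it, and invokes the Kullback--Leibler theorem (a statistic that preserves relative entropy is sufficient) to obtain the factorization $\kappa(z|x)\hE^X_x=r(x|z)\hE^Z_z$, which immediately yields proportionality; you instead localize to two-point inputs and reduce to a binary-channel lemma (rows that are not mutually singular force strict data-processing inequality for some inputs). Your binary lemma is true and provable exactly as you indicate --- e.g.\ via the posterior-equality condition for equality in classical data processing, which is the same Kullback--Leibler sufficiency fact the paper cites --- so both arguments rest on the same classical foundation; the paper's global application buys a reusable lemma for any discrete POVM, while your local reduction is arguably more elementary and makes the ``reverse-deterministic channel'' structure of condition~(\ref{gen_cond}) explicit. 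Three small repairs are needed: (1) condition~(\ref{cons1}) allows projectors of rank greater than one, so your perturbations $H$ must include traceless Hermitian directions \emph{inside} each block, not only off-block ones, in order to conclude $\hM_{x,y}\in\mathrm{span}\{\hE^X_{x'}\}$ rather than mere block-diagonality; (2) correspondingly $q(x;y)=\tr[\hM_{x,y}]$ only when the target projector is rank one, in general $q(x;y)$ is the coefficient $c(\tilde{x}(x;y)|x,y)$; (3) $t\mapsto t\ln t$ is strictly convex, not concave.
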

To show the theorem,
we need the following lemma.
\begin{lemm}
\label{lemm:equi}
Let $\{ \hE^X \}_{x \in \Omega_X}$
be a PVM with a discrete complete orthonormal condition~(\ref{cons1})
and let $\{ \hE^Z_z  \}_{z \in \Omega_Z} $ be a discrete POVM.
Suppose that
\begin{align}
	D(p^X_{\hrho}||p^X_{\hsigma})
	=
	D(p^Z_{\hrho}||p^Z_{\hsigma})
	\label{ch5lemcond}
\end{align}
holds for any states $\hrho$ and $\hsigma$,
where
$p^X_{\hrho} (x) = \tr [\hrho \hE^X_x ]$ and
$p^Z_{\hrho} (z) = \tr [\hrho \hE^Z_z ]$.
Then for each $z \in \Omega_Z$ there exist a scalar $q(z) \geq 0$
and $\tilde{x} (z) \in \Omega_X$
such that
\begin{align}
	\hE^Z_z
	=
	q(z)
	\hE^X_{\tilde{x} (z)}.
	\label{ch5lem}
\end{align}
\end{lemm}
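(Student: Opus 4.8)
The plan is to read off the algebraic structure of the effects $\hE^Z_z$ from the relative-entropy identity in three stages: first show that each $\hE^Z_z$ commutes with the PVM $\{\hE^X_x\}$, then that its diagonal blocks are scalar multiples of the projections, and finally that only a single block survives so that $\hE^Z_z = q(z)\hE^X_{\tilde{x}(z)}$.

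The crux is commutativity, which I would obtain by expanding the hypothesis to second order. Fix a full-rank state $\hsigma$ and a traceless Hermitian $H$, and set $\hrho = \hsigma + tH$, a legitimate state for small $|t|$. Expanding both sides of $D(p^X_{\hrho}\|p^X_{\hsigma}) = D(p^Z_{\hrho}\|p^Z_{\hsigma})$ in $t$, the zeroth- and first-order terms vanish and the second-order coefficients are the classical Fisher informations, so that
\[
	\sum_{x} \frac{(\tr[H\hE^X_x])^2}{\tr[\hsigma\hE^X_x]} = \sum_{z} \frac{(\tr[H\hE^Z_z])^2}{\tr[\hsigma\hE^Z_z]}.
\]
Choosing $H$ purely off-diagonal, $H = \hE^X_a W \hE^X_b + \hE^X_b W^\dagger \hE^X_a$ with $a \neq b$, one has $\tr[H\hE^X_x] = 0$ for every $x$, so the left-hand side vanishes; since $\hsigma$ is full rank every summand on the right is nonnegative, forcing each $\tr[H\hE^Z_z] = 0$. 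As $W$ is arbitrary this gives $\hE^X_b \hE^Z_z \hE^X_a = 0$ for $a \neq b$, i.e. $\hE^Z_z = \sum_x \hE^X_x \hE^Z_z \hE^X_x$ is block-diagonal with respect to $X$.

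Next I would restrict to states supported in a single block $\hE^X_{x_0}\mathcal{H}$. For such $\hrho,\hsigma$ the distribution $p^X$ is deterministic at $x_0$, so $D(p^X_{\hrho}\|p^X_{\hsigma}) = 0$, and the hypothesis gives $p^Z_{\hrho} = p^Z_{\hsigma}$. Because block-diagonality makes $p^Z_{\hrho}(z) = \tr[\hrho\,\hE^X_{x_0}\hE^Z_z\hE^X_{x_0}]$ depend only on the restricted effect, equality of these expectations over all block-states forces $\hE^X_{x_0}\hE^Z_z\hE^X_{x_0} = c_{z,x_0}\hE^X_{x_0}$. Combined with the previous step, $\hE^Z_z = \sum_x c_{z,x}\hE^X_x$ with $c_{z,x} \geq 0$ and $\sum_z c_{z,x} = 1$ by completeness. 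Thus $p^Z_{\hrho}(z) = \sum_x c_{z,x} p^X_{\hrho}(x)$, so $Z$ is the classical post-processing of $X$ through the stochastic matrix $C = (c_{z,x})$. Since $\hrho \mapsto p^X_{\hrho}$ realizes every probability distribution on $\Omega_X$, the hypothesis becomes $D(Cp\|Cq) = D(p\|q)$ for all distributions $p,q$; the equality case of the log-sum (data-processing) inequality then requires, for each $z$, the ratio $p_x/q_x$ to be constant on $\{x : c_{z,x} > 0\}$ for all $p,q$, which is possible only if that set is a singleton $\{\tilde{x}(z)\}$. Hence $\hE^Z_z = q(z)\hE^X_{\tilde{x}(z)}$ with $q(z) = c_{z,\tilde{x}(z)} \geq 0$.

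I expect the first stage to be the main obstacle: the coherences of $\hE^Z_z$ are invisible to first order in the relative entropy and are detected only by the second-order Fisher form. The technical care lies in choosing $\hsigma$ full rank so that the Taylor expansion is valid and all denominators are strictly positive, and in using nonnegativity to pass from a vanishing sum to the vanishing of each individual term.
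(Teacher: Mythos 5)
Your overall route is sound and genuinely different from the paper's, though structurally parallel: both proofs first reduce $\hE^Z_z$ to the form $\sum_x \kappa(z|x)\hE^X_x$ and then collapse the sum to a single term. The paper achieves the first reduction in one stroke with a covariance trick: it builds the channel $\mathcal{F}(\hrho)=\sum_x \hU_x\hrho\hU_x^\dagger$ from arbitrary unitaries $\hU_x$ on the blocks $\hE^X_x\mathcal{H}$, notes that $\mathcal{F}$ preserves the $X$-statistics exactly, so the hypothesis forces $D(p^Z_{\hrho}\|p^Z_{\mathcal{F}(\hrho)})=0$, hence $\hE^Z_z=\mathcal{F}^\dagger(\hE^Z_z)$; invariance under all block unitaries then yields both your Stage 1 (vanishing off-diagonal blocks, via $\hU_x=\hE^X_x$) and your Stage 2 (scalar diagonal blocks, via a Schur-type argument). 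For the final step the paper invokes the Kullback--Leibler theorem that a statistic preserving relative entropy is sufficient, plus factorization, whereas you use the equality case of the log-sum/data-processing inequality for the stochastic matrix $C$; your version is more elementary and self-contained, and the two are dressings of the same classical fact. (Minor bookkeeping: when $\hE^Z_z=0$ your support set in Stage 3 is empty rather than a singleton, but then (\ref{ch5lem}) holds trivially with $q(z)=0$.)

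The one step that fails as literally written is the opening of Stage 1. In an infinite-dimensional Hilbert space --- the setting relevant to this paper, where $\Omega_X$ is, e.g., the photon-number spectrum --- $\hrho=\hsigma+tH$ need not be positive for \emph{any} $t\neq 0$ when $\hsigma$ is an arbitrary faithful state: full rank gives no spectral gap, and one can choose $\hsigma$ diagonal with eigenvalues $\lambda_k\to 0$ and an off-diagonal $H=\ket{e}\bra{e'}+\ket{e'}\bra{e}$ whose coefficients decay like $\lambda_k^{1/4}$, so that suitable vectors $\psi_k$ give $\bra{\psi_k}(\hsigma+tH)\ket{\psi_k}\approx \lambda_k-|t|\,\lambda_k^{1/2}<0$ for large $k$. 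This is repairable because the hypothesis quantifies over all states: given your $H$, supported on a finite-dimensional subspace $V$, choose $\hsigma=\tfrac12 P_V/\dim V+\tfrac12\hsigma'$ with $\hsigma'$ faithful, so that $\hsigma\geq cP_V$ and positivity holds for small $|t|$. Moreover, once $H$ is taken off-diagonal you do not need the Fisher-information expansion at all (which would itself require justifying term-by-term Taylor expansion of an infinite sum over $\Omega_Z$): your $H$ gives $p^X_{\hrho}=p^X_{\hsigma}$ \emph{exactly}, so the hypothesis gives $D(p^Z_{\hrho}\|p^Z_{\hsigma})=0$, hence $p^Z_{\hrho}=p^Z_{\hsigma}$, and linearity in $t$ yields $\tr[H\hE^Z_z]=0$ directly. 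With that replacement --- which is precisely the paper's device of producing pairs of states with identical $X$-statistics --- your argument goes through.
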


\begin{proof}[Proof of Lemma~\ref{lemm:equi}] 
Let $\hU_x$ be an arbitrary operator such that
$\hU_x^\dagger \hU_x = \hU_x \hU_x^\dagger = \hE^X_x$,
i.e. $\hU_x$ is an arbitrary unitary operator on a closed subspace 
$\hE^X_x \mathcal{H}$,
where $\mathcal{H}$ is the system's Hilbert space.
Define a CP and trace-preserving map $\mathcal{F}$ by
\begin{align*}
	\mathcal{F} (\hrho)
	:=
	\sum_{x\in \Omega_X}
	\hU_x 
	\hrho
	\hU_x^\dagger .
\end{align*}
Since 
$ 
\hE_{x}
\hU_{x^\prime}  
= 
\hE_x  
\hU_{x^\prime}   
\hU_{x^\prime}^\dagger
\hU_{x^\prime}
=
\hE_x  
\hE_{x^\prime}
\hU_{x^\prime} 
=
\delta_{x,x^\prime} \hU_{x^\prime},
$
we have
$p^X_{\hrho} (x) = p^X_{\mathcal{F}(\hrho)} (x)$
for any state $\hrho.$
Therefore, from the assumption~(\ref{ch5lemcond}) we have
\begin{align}
	D(p^Z_{\hrho}||p^Z_{\mathcal{F}(\hrho)})
	=
	D(p^X_{\hrho}||p^X_{\mathcal{F}(\hrho)})
	=
	0,
	\notag
\end{align}
and hence we obtain
\begin{align*}
	p^Z_{\hrho} (z)
	=
	p^Z_{\mathcal{F}(\hrho)} (z)
\end{align*}
for any $\hrho$ and any $z \in \Omega_Z$,
which is in the Heisenberg picture represented as
\begin{align}
	\hE^Z_z
	=
	{\mathcal{F}}^\dagger
	(\hE^Z_z) 
	=
	\sum_{x \in \Omega_X}
	\hU_x^\dagger
	\hE^Z_z
	\hU_x
	.
	\label{ch5lemtc1}
\end{align}
By taking $\hU_x$ as $\hE^X_x$,
we have
\begin{align}
	\hE^Z_z
	=
	\sum_{x \in \Omega_X}
	\hE_x^X
	\hE^Z_z
	\hE_x^X .
	\label{ch5lemtc2}
\end{align}
From Eqs.~(\ref{ch5lemtc1}) and (\ref{ch5lemtc2}),
an operator
$
	\hE_x^X
	\hE^Z_z
	\hE_x^X 
$
on $\hE_x^X \mathcal{H}$
commutes with an arbitrary unitary
$\hU_x$
on
$\hE_x^X \mathcal{H}$,
and therefore
$
	\hE_x^X
	\hE^Z_z
	\hE_x^X 
$
is proportional to the projection $\hE^X_x.$
Thus we can rewrite \equref{ch5lemtc2} as
\begin{align}
	\hE^Z_z
	=
	\sum_{x \in \Omega_X}
	\kappa (z|x)
	\hE^X_x ,
	\notag
\end{align}
where $\kappa (z|x)$ is a nonnegative scalar
that satisfies the normalization condition 
$\sum_{z \in \Omega_Z} \kappa (z|x) =1.$
Let us define a POVM $\{ \hE^{XZ}_{xz}  \}_{(x,z)  \in \Omega_X \times \Omega_Z}$
by
\begin{align}
	\hE^{XZ}_{xz}
	:=
	\kappa(z|x)
	\hE^X_x,
	\notag
\end{align}
whose marginal POVMs are given by $\hE^X_x$ and $\hE^Z_z$,
respectively.
Since the probability distribution for $\hE^{XZ}_{xz}$ is given by
\begin{align}
	p^{XZ}_{\hrho} (x,z)
	:=
	\tr[\hrho\hE^{XZ}_{xz}]
	=
	\kappa(z|x)
	p^X_{\hrho} (x),
\end{align}
$X$ is a sufficient statistic for 
a statistical model
$\{ p^{XZ}_{\hrho} (x,z) \}_{\hrho \in \mathcal{S}(\mathcal{H})}$,
where $\mathcal{S} (\mathcal{H})$ is the set of all the density operators
on $\mathcal{H}$.
Thus, from the sufficiency of $X$ and the assumption~(\ref{ch5lemcond}), we have
\begin{align*}
	D(p^{XZ}_{\hrho}||p^{XZ}_{\hsigma})
	=
	D(p^X_{\hrho}||p^X_{\hsigma})
	=
	D(p^Z_{\hrho}||p^Z_{\hsigma}).
\end{align*}
Since a statistic that does not decrease the relative entropy is
a sufficient statistic~\cite{kullbackleibler1951},
$Z$ is a sufficient statistic for 
$\{ p^{XZ}_{\hrho} (x,z) \}_{\hrho \in \mathcal{S}(\mathcal{H})}$.
Therefore there is a nonnegative scalar $r(x|z)$ such that
\begin{align*}
	p^{XZ}_{\hrho} (x,z)
	=
	r(x|z)
	p^Z_{\hrho} (z),
\end{align*}
or equivalently in the Heisenberg picture
\begin{align}
	\kappa(z|x) 
	\hE^X_x
	=
	r(x|z)
	\hE^Z_z .
	\label{ch5lemtc3}
\end{align}
To prove~(\ref{ch5lem}),
we have only to consider the case of $\hE^Z_z \neq 0.$
For such $z \in \Omega_Z$, there exists $x \in \Omega_X$
such that $\kappa(z|x) \hE^X_x \neq 0.$
Thus, from \equref{ch5lemtc3}
we have
$\hE^Z_z = \frac{\kappa(z|x)}{ r(x|z) } \hE^X_x$
and the condition~(\ref{ch5lem}) holds.
\end{proof}

\begin{proof}[Proof of Theorem~\ref{th_diag_equi}]
(i) $\Rightarrow$ (ii) is evident from Theorem~\ref{th_diag_conservation}.
Conversely, (i) readily follows from (ii) and Lemma~\ref{lemm:equi}
by identifying $\hE^Z_z$ with $ { \mathcal{E}^Y_y }^\dagger (\hE^X_x)$.
\end{proof}

\section{Examples of relative-entropy conservation law}
\label{sec:examples}
In this section, we apply the general theorem obtained in the previous section
to some typical quantum measurements,
namely a quantum non-demolition measurement,
a measurement on two-level sytems,
a photon-counting measurement,
a quantum-counter model,
homodyne and heterodyne measurements.

\subsection{Qunatum non-demolition measurement}
We first consider a quantum non-demolition (QND) 
measurement~\cite{RevModPhys.52.341,braginsky1995quantum,RevModPhys.68.1}
of a system's PVM $\ket{x}\bra{x}$.
In the QND measurement,
the $X$-distribution of the system is not disturbed by the measurement back-action.
This condition is mathematically expressed as
\begin{equation}
	p^X_{\mathcal{E}^Y(\hrho)}(x)
	=
	p^X_{\hrho}(x)
	\label{qndcond1}
\end{equation}
for all $\hrho$,
where 
\[
	\mathcal{E}^Y = \int \mu_0(dy) \mathcal{E}^Y_y 
\]
is the completely positive (CP) and trace-preserving map which describes
the state change of the system in the measurement of $Y$
in which the measurement outcome is completely discarded.
The QND condition in \equref{qndcond1} is also expressed 
in the Heisenberg representation as
\begin{equation}
	{\mathcal{E}^Y}^\dagger (\ket{x} \bra{x})
	= \ket{x} \bra{x}.
	\label{qndcond2}
\end{equation}
Let $\hM_{yz}$ be the Kraus operator~\cite{Kraus1971311}
of the CP map $\mathcal{E}^Y_y$ such that
\begin{equation*}
	\mathcal{E}^Y_y (\hrho) 
	= \sum_z \hM_{yz} \hrho \hM^\dagger_{yz}.
\end{equation*}
Then \equref{qndcond2} becomes
\begin{equation}
	\int \mu_0(dy) \sum_z 
	\hM^\dagger_{yz} 
	\ket{x} \bra{x}
	\hM_{yz}
	= \ket{x} \bra{x}.
	\label{qndcond3}
\end{equation}
Taking the diagonal element of \equref{qndcond3}
over the state $\ket{x^\prime}$ with
$x \neq x^\prime$, we have
\begin{equation*}
	\int \mu_0(dy) \sum_z | \bra{x} \hM_{yz} \ket{x^\prime} |^2
	= 0.
\end{equation*}
Therefore the Kraus operator $\hM_{yz}$ is diagonal in the $x$-basis
and, from \equref{ass_pyx}, it can be written as
\begin{equation}
	\hM_{yz} = 
	\begin{cases}
	\displaystyle
	\sum_x e^{i\theta(x;y,z)} \sqrt{p(y,z|x)} \ket{x} \bra{x},
	\\
	\displaystyle
	\int dx e^{i\theta(x;y,z)} \sqrt{p(y,z|x)} \ket{x} \bra{x},
	\end{cases}
	\label{hmy_qnd}
\end{equation}
where $p(y,z|x)$ satisfies
\[
	p(y|x) = 
	\sum_z  p(y,z|x).
\]
We take the reference PVM 
$\ket{x} \bra{x}$,
and from \equref{hmy_qnd} we have
\begin{equation}
	{\mathcal{E}^Y_y}^\dagger(\ket{x} \bra{x})
	=
	\sum_z
	\hM_{yz}^\dagger \ket{x} \bra{x} \hM_{yz}
	=
	p(y|x) \ket{x} \bra{x},
	\label{mxxm_qnd}
\end{equation}
which ensures the condition~(\ref{gen_cond})
with
\begin{align*}
	\tilde{x}(x;y)&=x, \\
	q(x;y) &= p(x|y).
\end{align*}
Thus from Theorem~\ref{th_diag_conservation}
the relative-entropy conservation law~(\ref{rent_conservation_gen}) holds.
In this case Ban's condition~(\ref{ban_gen_cond}) and 
Shannon entropy-conservation law in \equref{ban_conservation}
also hold~\cite{0305-4470-32-9-012}.

The relative entropy conservation relation in \equref{rent_conservation_gen}
in the QND measurement can be understood in a classical manner
as follows.
Let us consider a change in the $x$-distribution function
from $p^X_{\hrho}(x)$ 
to $p^X_{\hrho_y} (x)$.
In the QND measurement, by using \equref{mxxm_qnd},
the distribution of $X$ for the conditional post-measurement state becomes
\begin{equation}
	 p^X_{\hrho_y} (x) = \frac{ p(y|x)  p^X_{\hrho} (x)  }{ p^Y_{\hrho} (y) }.
	 \label{clas_bayes}   
\end{equation}
Note that the commutativity of $\ket{x}\bra{x}$ and $\hM_{yz}$ is essential
in deriving \equref{clas_bayes}.
Then \equref{clas_bayes} can be interpreted as
Bayes' rule for the conditional probability of $X$ under measurement outcome of $Y$.
Since the QND measurement does not disturb 
the system's observable $X$,
the change in the $X$-distribution of the system
is only the modification of observer's knowledge 
so as to be consistent with 
the obtained measurement outcome of $Y$
based on Bayes' rule in \equref{clas_bayes}.
Bayes' rule is also valid in a classical setup
in which the information about the system $X$ is
conveyed from the classical measurement outcome $Y$
without disturbing $X$. 
Since we can derive the relative-entropy conservation law
in \equref{rent_conservation_gen} 
from Bayes' rule in \equref{clas_bayes},
we can conclude that
the relative-entropy conservation law in both classical and QND measurements
is derived from the same Bayes' rule,
or the modification of the observer's knowledge.

The rest of this section is devoted to
exmaples of demolition measurements
in which the reference POVM observable $X$ is disturbed
by the measurement back-action,
yet the relative-entropy conservation law still holds.

\subsection{Measurements on two-level systems}
We consider a two-level system corresponding to a two-dimensional Hilbert space
spanned by complete orthonormal kets $\ket{0}$ and $\ket{1}$.
As the reference PVM of the system, we take
\begin{equation}
	\hE^X_x = \ket{x} \bra{x} 
	\quad (x=0,1).
\end{equation}
We consider a measurement $Y$ described by the following instrument:
\begin{align}
	\mathcal{E}^Y_y (\hrho)
	= \hat{\phi}_y 
	\bra{y} \hrho \ket{y}
	\quad 
	(y=0,1),
	\label{two_inst}
\end{align}
where $\hat{\phi}_y$ is an arbitrary state.
From \equref{two_inst} we can show that
\begin{align}
	\hE^Y_y
	&=
	\ket{y} \bra{y},
	\notag
	\\
	{\mathcal{E}^Y_y}^\dagger
	(\ket{x} \bra{x})
	&=
	\bra{x} \hat{\phi}_y \ket{x}
	\ket{y} \bra{y},
	\notag
\end{align}
or
\begin{align}
	p(y|x)
	&=
	\delta_{x,y},
	\label{two_pyx}
	\\
	q(x;y) 
	&=
	\bra{x} \hat{\phi}_y \ket{x},
	\label{two_q}
	\\
	\tilde{x}(x;y)
	&=
	y.
	\label{two_tx}
\end{align}
Then the conditions for Theorem~\ref{th_diag_conservation} are satisfied
and the relative-entropy conservation law
\begin{align*}
	D(p^Y_{\hrho}|| p^Y_{\hsigma})
	&=
	D_X(\hrho|| \hsigma)
	-
	E_{\hrho}[   D_X(\hrho_y|| \hsigma_y)  ]
	\\
	&=
	D_X(\hrho|| \hsigma)
\end{align*}
holds.
The second equality follows from $\hrho_y = \hsigma_y$.
On the other hand,
from Eqs.~(\ref{two_pyx})-(\ref{two_tx}),
Ban's condition~(\ref{ban_gen_cond}) does not hold
if the post-measurement state $\hat{\phi}_y$ 
does not coincide with one of eigenstates $\ket{x} \bra{x}$.

Let us examine the Shannon-entropy conservation law~(\ref{ban_conservation}).
To make the discussion concrete,
we assume $\hat{\phi}_y = \hI/2$.
Then the Shannnon entropy of $X$ and the mutual information between $X$ and $Y$
are evaluated to be
\begin{gather*}
	I_{\hrho} (X:Y)
	=
	H_{\hrho} (X)
	=
	- \sum_{x = 0,1} \bra{x} \hrho \ket{x} \ln \bra{x} \hrho \ket{x} ,
	\\
	H_{\hat{\rho}_y} (X) 
	=
	H_{\hat{\phi}_y} (X) 
	= \ln 2.
\end{gather*}
Thus
\begin{align}
	H_{\hrho} (X)
 	- H_{\hat{\rho}_y} (X)
 	= I_{\hrho}(X:Y) - \ln 2
 	\neq
 	I_{\hrho}(X:Y).
 	\notag
\end{align}
Therefore, the Shannon-entropy conservation law~(\ref{ban_conservation}) does not hold.
In this measurement model,
the measured information of $Y$ is maximal and 
any information is not contained in the post-measurement state.
This fact is properly reflected in the fact $D_X(\hrho_y||\hsigma_y)=0$
if we consider the relative entropy,
while the Shannon entropy is non-zero if the post-measurement state is an eigenstate.
This is the reason why the Shannon-entropy conservation law~(\ref{ban_conservation}) does not hold.

\subsection{Photon-counting measurement}
The photon-counting measurement
described in Refs.~\cite{doi:10.1080/713820643,0954-8998-1-2-005,PhysRevA.41.4127}
measures the photon number in a closed cavity 
in a destructive manner
and continuously in time.
The measurement process
in an infinitesimal time interval $dt$ 
is described by the following measurement operators:
\begin{align}
	\hM_0(dt) &= \hI - \left(i\omega + \frac{\gamma}{2} \right) \hn dt \label{pdmo0}, 
	\\
	\hM_1(dt) &= \sqrt{\gamma dt } \ha \label{pdmo1},
\end{align}
where $\omega$ is the angular frequency of the observed cavity photon mode,
$\gamma >0$ is the coupling constant of the photon field with the detector,
$\ha$ is the annihilation operator of the photon field,
and $\hn := \ha^\dagger \ha$ is the photon-number operator.
The event corresponding to the measurement operator in \equref{pdmo0}
is called the no-count process in which
there is no photocount,
while the event corresponding to \equref{pdmo1} is called
the one-count process
in which a photocount is registered.
In the one-count process,
the post-measurement wave function is multiplied 
by the annihilation operator $\ha$
which decreases the number of photons in the cavity by one.
Thus, this measurement is not a QND measurement.

From the mesurement operators for an infinitesimal time interval in Eqs.~(\ref{pdmo0}) and (\ref{pdmo1}),
we can derive an effective measurement operator for a finite time interval $[0,t)$ as follows 
(cf. Eq.~(29) in Ref.~\cite{0954-8998-1-2-005}):
\begin{equation}
	\hM_m(t) = \sqrt{ \frac{ (1-e^{- \gamma t})^m }{m!}  } 
	e^{- \left(i\omega + \frac{\gamma}{2} \right) t \hn } \ha^m,
	\label{pdmo2}
\end{equation}
where $m$ is the number of photocounts in the time interval $[0,t)$,
which corresponds to the measurement outcome $y$ in Sec.~\ref{sec:gen}.
The POVM for the measurement operator in \equref{pdmo2} can be written as
\begin{equation}
	\hM_m^\dagger(t) \hM_m(t) 
	=
	p(m|\hn;t),
	\label{pdpovm}
\end{equation}
where
\begin{equation}
	p(m|n;t) 
	=
	\binomial{n}{m}
	(1-e^{-\gamma t} )^m e^{-\gamma t (n - m)} .
	\label{pmnt}
\end{equation}
Equation~(\ref{pdpovm}) shows that the measurement outcome $m$
conveys the information about the cavity photon number $\hn.$
Especially in the infinite-time limit $t\rightarrow \infty$,
the conditional probability in \equref{pmnt}
becomes $\delta_{m,n}$,
indicating that the number of counts $m$ 
conveys the complete information about the 
photon-number distribution of the system.
Then we take the reference PVM as 
the projection operator into the number state,
$\ket{n} \bra{n}$,
with $\hn \ket{n} = n\ket{n}$ and the orthonormal
condition $\braket{n}{n^\prime} = \delta_{n,n^\prime}$.
From the measurement operator in \equref{pdmo2},
we obtain
\begin{gather}
	\hM_m^\dagger (t) \ket{n} \bra{n} \hM_m(t) 
	=
	q(n;m;t)
	\ket{\tilde{n}(n;m)} \bra{\tilde{n}(n;m)},
	\label{mnnm_pd}
	\\
	\tilde{n}(n;m)
	=
	n+m, 
	\label{n+m}
	\\
	q(n;m;t)  
	= p(m|m+n;t).
	\label{qnmt_pd}
\end{gather}
Equation~(\ref{n+m}) can be interpreted as
the photon number of the pre-measurement state
when the number of photocounts is $m$
and the photon number remaining in the post-measurement state is $n$.
From Eqs.~(\ref{mnnm_pd})-(\ref{qnmt_pd}),
the condition~(\ref{gen_cond})
for Theorem \ref{th_diag_conservation},
together with Ban's condition~(\ref{ban_gen_cond}),
is satisfied and
we have the relative entropy conservation relation for the photon-counting measurement
as
\begin{align*}
	D(p_{\hrho}(\cdot ;t) || p_{\hsigma} (\cdot;t))
	=
	D_{\mathrm{diag}}(\hrho||\hsigma)
	-
	E[D_{\mathrm{diag}}(\hrho_m(t)||\hsigma_m(t)) ],
\end{align*}
where $p_{\hrho}(m;t) = \tr [ \hrho \hM_m^\dagger(t) \hM_m(t) ]$
is the probability distribution of the number of photocounts $m$.
We remark that the Shannon entropy-conservation law in \equref{ban_conservation}
also holds in this measurement~\cite{Ban1997209}.

\subsection{Quantum counter model}
A quantum counter
model~\cite{PhysRevLett.68.3424,PhysRevA.53.3808}
is a continuous in time measurement on a signle-mode photon field
in which no-count and one-count measurement operators for an infinitesimal time interval
$dt$
are given by
\begin{align}
	\hM_0 (dt)
	&=
	\hI 
	- \frac{\gamma }{2} \ha \ha^\dagger dt,
	\notag
	\\
	\hM_1 (dt)
	&=
	\sqrt{\gamma dt} \ha^\dagger,
	\notag
\end{align}
respectively.
The effective measurement operator for a finite time interval
$[0,t]$
is known to be dependent only on the total number $m$ of counting events
in the time interval and given by~\cite{PhysRevA.53.3808}
\begin{align}
	\hM^{\mathrm{qc}}_m (t)
	=
	\sqrt{
	\frac{(e^{\gamma t} -1 )^m }{m!}
	}
	e^{  - \gamma t \ha \ha^\dagger /2 }
	\left(
	\ha^\dagger
	\right)^m .
	\label{qc_mop}
\end{align}
The POVM for this measurement is then
\begin{align}
	\hE^{\mathrm{qc}}_m (t)
	&=
	\hM^{\mathrm{qc}}_m {}^\dagger (t)
	\hM^{\mathrm{qc} }_m (t)
	\notag
	\\
	&=
	\frac{(e^{\gamma t} -1 )^m }{m!}
	\ha^m
	e^{  - \gamma t \ha \ha^\dagger  }
	(\ha^\dagger)^m
	\notag
	\\
	&=
	p^{\mathrm{qc}} (m|\hn ; t),
	\notag
\end{align}
where
\begin{align}
	p^{\mathrm{qc}} (m|n;t)
	&=
	\binomial{n+m}{m}
	(e^{\gamma t} -1)^m
	e^{-\gamma t (n+m +1)}
	\notag
\end{align}

In this measurement model we can show two kinds of relative-entropy conservation laws
corresoponding to two different system's observables.
As the first observable,
we take the PVM
$\ket{n} \bra{n}$.
Then from \equref{qc_mop},
we have
\begin{gather}
	\hM^{\mathrm{qc}}_m {}^\dagger (t)
	\ket{n} \bra{n}
	\hM^{\mathrm{qc} }_m (t)
	=
	p^{\mathrm{qc}} (m|\tilde{n}(n;m);t)
	\ket{\tilde{n}(n;m)}
	\bra{\tilde{n}(n;m)},
	\label{qc_joken1}
	\\
	\tilde{n}(n;m)
	=
	n-m
\end{gather}
and the conditions for Theorem~\ref{th_diag_conservation},
together with Ban's condition~(\ref{ban_gen_cond}), hold.
Therefore the relative-entropy conservation law
\begin{align}
	D ( p^{\mathrm{qc}}_{\hrho} (\cdot ;t) || p^{\mathrm{qc}}_{\hsigma} (\cdot ;t) )
	=
	D(p^N_{\hrho} || p^N_{\hsigma})
	- 
	E_{\hrho} 
	[   
	D(p^N_{\hrho_m(t)} || p^N_{\hsigma_m(t)})
	]
	\label{qc_rcons1}
\end{align}
holds, where
\begin{gather*}
	p^{\mathrm{qc}}_{\hrho} (m;t)
	=
	\tr \left[
		\hrho \hE^{\mathrm{qc}}_m (t)
	\right]
	=
	\sum_{n=0}^\infty 
	p^{\mathrm{qc}} (m|n;t) \bra{n} \hrho \ket{n} ,
	\\
	p^N_{\hrho}(n)
	=
	\bra{n} \hrho \ket{n},
\end{gather*}
with
$\hrho_m (t)$
being the post-measurement state when 
the measurement outcome is $m$.

The second system's POVM is given by
\begin{gather}
	\hE^X_{x} dx
	=
	p^X (x|\hn) dx,
	\label{qc_exdef}
	\\
	p^X (x|n)
	=
	\frac{e^{-x} x^n }{ n !},
	\notag
\end{gather}
where $x$ is a real positive variable.
The probability distribution of $X$ 
\begin{equation}
	p^X_{\hrho} (x) dx
	=
	\tr \left[
		\hrho
		\hE^X_{x}
	\right]
	dx
	\notag
\end{equation}
is known to be the distribution of 
$\lim_{t\rightarrow \infty} m /e^{\gamma t}$,
corresponding to the total information obtained during
the infinite time interval~\cite{PhysRevA.53.3808}.
Equation~(\ref{qc_exdef})
implies that $X$ is obtained by coarse-graining $\hn$.
It can be shown~\cite{PhysRevA.53.3808}
that the distribution $p^X_{\hrho} (x)$ determines
the photon-number distribution by
\begin{align}
	\left.
	\bra{n} \hrho \ket{n}
	=
	\frac{d^n}{dx^n}
	(e^x p^X_{\hrho} (x))
	\right|_{
	x = 0
	}
	.
	\notag
\end{align}
However,
this just implies that the Markov mapping
\begin{equation}
	p^X_{\hrho} (x)
	= 
	\sum_{n = 0}^\infty
	p^X(x|n)
	p^N_{\hrho} (n)
	\notag
\end{equation}
is injective and we cannot conclude that
the information contained in $X$ and $\hn$ are the same
as the following discussion shows.

From Eqs.~(\ref{qc_mop}) and (\ref{qc_exdef}) we obtain
\begin{align}
	\hM^{\mathrm{qc}}_m {}^\dagger (t)
	p^X (x|\hn)
	\hM^{\mathrm{qc} }_m (t)
	&=
	q(x;m) 
	p^X (\tilde{x} (x;m)  | \hn),
	\label{qc_2jouken1}
	\\
	q(x;m)
	&=
	e^{-\gamma t}
	p^{\mathrm{qc}}(m|\tilde{x}(x;m)),
	\label{qc2q}
	\\
	p^{\mathrm{qc}}(m|x)
	&=
	\frac{
		\left[( e^{\gamma t} -1  ) x \right]^m
	}{
		m!
	}
	\exp \left[
		-(e^{\gamma t} -1)x
	\right],
	\\
	\tilde{x}(x;m)
	&=
	e^{-\gamma t} x .
	\notag
\end{align}
Here
$
p^{\mathrm{qc}}(m|x)
$
satisfies
$
	\sum_{m =0}^\infty  p^{\mathrm{qc}}(m|x) =1.
$
Furthermore, for an arbitrary function
$F(x)$,
\begin{align}
	\int_0^\infty
	dx
	q(x;m)
	F(\tilde{x}(x;m))
	&=
	\int_0^\infty
	d( e^{-\gamma t} x)
	p^{\mathrm{qc}}(m|e^{-\gamma t}x)
	F(e^{-\gamma t}x)
	\notag
	\\
	&=
	\int_0^\infty
	dx
	p^{\mathrm{qc}}(m|x)
	F(x).
	\label{qc_2jouken2}
\end{align}
The POVM for the measurement outcome $m$ can be written as
\begin{align}
	\hM^{\mathrm{qc}}_m {}^\dagger (t)
	\hM^{\mathrm{qc} }_m (t)
	&=
	\int_0^\infty dx
	\hM^{\mathrm{qc}}_m {}^\dagger (t)
	p^X (x|\hn)
	\hM^{\mathrm{qc} }_m (t)
	\notag 
	\\
	&=
	\int_0^\infty dx
	q(x;m)
	p^X (\tilde{x} (x;m)  | \hn)
	\notag 
	\\
	&=
	\int_0^\infty dx
	p^{\mathrm{qc}}(m|x)
	p^X (x  | \hn).
	\label{qc_2jouken3}
\end{align}
From Eqs.~(\ref{qc_2jouken1}), (\ref{qc_2jouken2}) and (\ref{qc_2jouken3}) and
Thereom~\ref{th_rent_conservation1},
the relative-entropy conservation law
\begin{align}
	D ( p^{\mathrm{qc}}_{\hrho} (\cdot ;t) || p^{\mathrm{qc}}_{\hsigma} (\cdot ;t) )
	=
	D(p^X_{\hrho} || p^X_{\hsigma})
	-
	E_{\hrho} \left[
		D(p^X_{\hrho_m (t)} || p^X_{\hsigma_m (t)})
	\right].
	\label{qc_rcons2}
\end{align}
holds.

Let us consider the asysmptotic behaviors of relative entropies
in the limit
$t \rightarrow \infty$.
Since
$
m/e^{\gamma t}
$
converges to $X$ in distribution, we have
\begin{align}
	D ( p^{\mathrm{qc}}_{\hrho} (\cdot ;t) || p^{\mathrm{qc}}_{\hsigma} (\cdot ;t) )
	\xrightarrow{t \rightarrow \infty} 
	D(p^X_{\hrho} || p^X_{\hsigma}).
	\label{qc_limpqc}
\end{align}
From Eqs.~(\ref{qc_rcons1}), (\ref{qc_rcons2}) and (\ref{qc_limpqc})
we obtain
\begin{align}
	E_{\hrho} 
	[   
	D(p^N_{\hrho_m(t)} || p^N_{\hsigma_m(t)})
	]
	& \xrightarrow{t \rightarrow \infty} 
	D(p^N_{\hrho} || p^N_{\hsigma})
	-
	D(p^X_{\hrho} || p^X_{\hsigma}),
	\label{qc_lim1}
	\\
	E_{\hrho} \left[
		D(p^X_{\hrho_m (t)} || p^X_{\hsigma_m (t)})
	\right]
	& \xrightarrow{t \rightarrow \infty} 
	0.
	\label{qc_lim2}
\end{align}
From the chain rule of relative entropy~\cite{cover2012elements},
the right-hand-side of Eq.~(\ref{qc_lim1})
is evaluated to be
\begin{gather}
	\int_0^\infty dx
	p^X_{\hrho} (x)
	D( p^{N}_{\hrho} (\cdot |x) ||  p^{N}_{\hsigma}  (\cdot |x))
	\geq 0,
	\label{qcsa1}
\end{gather}
where
\begin{gather}
	p^{N}_{\hrho} (n |x)
	=
	\frac{ p^X(x|n)  p^{N}_{\hrho} (n) }{ p^X_{\hrho} (x) }
	\label{qc_pnx}
\end{gather}
is the photon-number distribution conditioned by $X$.
The equality in~(\ref{qcsa1})
holds if and only if the
photon-nuber distributions of
$\hrho$
and
$\hsigma$
conincide.
This can be shown as follows.

If the equality in
Eq.~(\ref{qcsa1})
holds,
we have
$D( p^{N}_{\hrho} (\cdot |x) ||  p^{N}_{\hsigma}  (\cdot |x)) = 0$
for almost all $x \geq 0$.
Thus
\begin{equation}
	\forall n \geq 0 , \quad
	p^N_{\hrho}(n|x)
	=
	p^N_{\hsigma}(n|x)
	\label{qc2tochu}
\end{equation}
for almost all
$x > 0$,
and therefore we can take at least one $x >0$
satisfying Eq.~(\ref{qc2tochu}).
From Eqs.~(\ref{qc_pnx}) and (\ref{qc2tochu}),
we have
\begin{align}
	\forall n \geq 0 , \quad
	\frac{
	\bra{n} \hrho \ket{n} 
	}{
	p^X_{\hrho} (x)
	}
	=
	\frac{
	\bra{n} \hsigma \ket{n} 
	}{
	p^X_{\hsigma} (x)
	}.
	\label{qc2tochu2}
\end{align}
Taking the summation of \equref{qc2tochu2} over
$n$,
we have
\begin{equation}
p^X_{\hrho} (x) = p^X_{\hsigma} (x).
\label{qc2tochu3}
\end{equation}
From Eqs.~(\ref{qc2tochu2}) and (\ref{qc2tochu3}),
we finally obtain
$\bra{n} \hrho \ket{n} = \bra{n} \hsigma \ket{n} $
$(\forall n\geq 0 ).$

Since the right-hand-side of Eq.~(\ref{qc_lim1}) is the difference between the information contents
of $\hn$ and $X$,
the above discussion shows that the measurement outcome $m$ 
carries strictly smaller information
than that contained in the photon-number distribution.
Equation~(\ref{qc_lim1})
also shows that the difference of these information contents are obtained by
a projection measurement on the post-measurement state.

From Eq.~(\ref{qc2q}) Ban's condition~(\ref{ban_gen_cond}) does not hold
for $X$.
The difference between the Shannon entropies of pre- and post-measurement states
is given by  
\begin{align}
	&H_{\hrho} (X)
	- E_{\hrho} [  H_{\hrho_m(t)} (X) ]
	\notag \\
	&=
	H_{\hrho} (X)
	+
	\sum_{m=0}^\infty
	p^{\mathrm{qc}}_{\hrho} (m)
	\int_0^\infty dx
	p^X_{\hrho_m(t)} (x)
	\ln p^X_{\hrho_m(t)} (x)
	\notag \\
	&=
	H_{\hrho} (X)
	+\sum_{m=0}^\infty
	\int_0^\infty dx
	e^{-\gamma t}
	p^{\mathrm{qc}}(m|e^{-\gamma t} x )
	p^X_{\hrho}(e^{-\gamma t} x)
	\ln \left(
	\frac{
		e^{-\gamma t}
		p^{\mathrm{qc}}(m|e^{-\gamma t}x)
		p^X_{\hrho}(e^{-\gamma t} x)
	}{
		p^{\mathrm{qc}}_{\hrho} (m)
	}
	\right)
	\notag \\
	&=
	-\gamma t 
	+ I_{\hrho} (X : \mathrm{qc})
	\neq
	I_{\hrho} (X : \mathrm{qc}),
	\label{qc_hineq}
\end{align}
and the Shannon-entropy conservation law~(\ref{ban_conservation}) does not hold.
The term $-\gamma t$ in Eq.~(\ref{qc_hineq}) comes from the Jacobian of the variable transformation
$x \rightarrow \tilde{x} (x;y) = e^{-\gamma t} x$
and the strong dependence of the Shannon entropy for a continuous variable on the reference measure 
$dx$.
On the other hand, if we take the relative entropy,
such dependence on the reference measure is absent 
and we can analyze both of information conservations of $\hn$ and $X$
in a consistent manner.

\subsection{Balanced homodyne measurement}
The balanced homodyne measurement~\cite{PhysRevA.47.642,1355-5111-8-1-015,CBO9780511813948}
measures one of the quadrature amplitudes 
of a photon field $\ha$ in a destructive manner
such that the system's photon field relaxes into a vacuum state $\ket{0}$.
This measurement process is implemented by
mixing the signal photon field with
a classical local-oscillator field
into two output modes
via a $50\%$-$50\%$ beam splitter
and taking the difference of the
photocurrents of the two output signals.
For later convenience,
we define the following quadrature amplitude operators:
\[
\hX_1:= \frac{\ha + \ha^\dagger}{\sqrt{2}} ,
\quad
\hX_2:= \frac{\ha - \ha^\dagger}{\sqrt{2}i}.
\]

The measurement operator in the interaction picture
for an infinitesimal time interval $dt$
is given by 
\begin{equation}
	\hM(d\xi(t) ;dt ) = \hI - \frac{\gamma}{2} \hn dt 
	+ \sqrt{\gamma} \ha \,  d \xi(t),
	\label{mop_homo}
\end{equation}
where
$\gamma$ is the stregth of the coupling with the detector,
$d \xi (t)$ is a real stochastic variable corresponding to
the output homodyne current
which satisfies the It\^o rule
\begin{equation}
	\left( d\xi(t) \right)^2 =dt.
	\label{ito_homo}
\end{equation}
The reference measure $\mu_0 (\xi(\cdot))$ for the measurement outcome 
is the Wiener measure in which 
infinitesimal increments
$\{ d\xi (s) \}_{s\in [0,t)}$
are independent Gaussian stochastic variables
with mean $0$ and variance $dt$.
From the measurement operator in~\equref{mop_homo},
the ensemble average of the outcome $d\xi(t)$ 
for the system's state $\hrho(t)$ at time $t$ is given by
\begin{equation}
	E[d\xi(t)|\hrho(t)] = \sqrt{2\gamma} \mean{\hX_1}_{\hrho(t)},
	\label{expectation_homo}
\end{equation}
where $\mean{\hA}_{\hrho} := \tr [\hrho \hA]$.
Equation~(\ref{expectation_homo})
indicates that $d\xi (t)$ measures the quadrature amplitude of the system.
The general properties of the continuous quantum measuerment 
with such diffusive terms
are investigated in Refs.~\cite{Wiseman200191,barchielli2009quantum}.

The time evolution of the system prepared in a pure state $\ket{\psi_0}$
at $t=0$ is given by the following stochastic Schr\"{o}dinger equation
\[
	\ket{\psi(t+dt)} = \hM(d\xi(t) ;dt ) \ket{\psi(t)}.
\]
The solution is given by~\cite{1355-5111-8-1-015}
\begin{equation}
	\ket{\psi(t)} = \hM_{y(t)}(t) \ket{\psi_0},
	\label{swf_homo}
\end{equation}
where
\begin{equation}
	\hM_{y(t)} (t) =e^{-\frac{\gamma t}{2} \hn }
	\exp \left[ y(t) \ha - \frac{1}{2} (1-e^{-\gamma t}) \ha^2 \right],
	\label{homomop}
\end{equation}
\begin{equation}
	y(t) = \sqrt{\gamma} \int_0^t e^{- \frac{\gamma s}{2} } d\xi(s).
	\label{y_homo}
\end{equation}
Note that $\ha^2$ term should be included 
in the exponent 
on the right-hand side of \equref{homomop}
to be consistent with the It\^{o} rule given in \equref{ito_homo}.
We also mention that
the measurement operator in \equref{homomop}
does not commute with the quadrature amplitude
operator $\hX_1$ and therefore
this measurement disturbs $\hX_1$.
In the infinite-time limit $t\rightarrow \infty$
the stochastic wave function in \equref{swf_homo}
approaches the vacuum state $\ket{0}$ 
regardless of the initial state,
which also indicates the destructive nature of the measurement.

As the reference PVM,
we take the spectral measure $ \ket{x}_1 {}_1 \bra{x} $
of the quadrature amplitude operator
$\hX_1$,
where $\ket{x}_1$ satisfies
\begin{equation}
	\hX_1 \ket{x}_1 = x \ket{x}_1,
	\quad
	{}_1\braket{x}{x^\prime}_1
	= 
	\delta(x-x^\prime).
	\notag
\end{equation}
Then, the operator
$
	\hM_{y(t)}^\dagger (t) 
	\ket{x}_1
	{}_1
	\bra{x} \hM_{y(t)}(t)
$
and the POVM for the measurement outcome $y(t)$
are evaluated to be
(see Appendix~\ref{sec:app1}
for derivation)
\begin{gather}
	\hM_{y(t)}^\dagger (t) 
	\ket{x}_1
	{}_1
	\bra{x}
	\hM_{y(t)}(t)
	=
	q(x;y(t);t)
	\lrket{
	\tilde{x}(x;y(t);t)
	}_1
	\leftidx{_1}{
	\lrbra{
	\tilde{x}(x;y(t);t) 
	}
	}{},
	\label{mxxm}
	\\
	q(x;y(t);t)
	=
	e^{-\gamma t/2} 
	p(y|\tilde{x}(x;y(t))),
	\label{homo_q}
	\\
	p(y|x)
	=
	\frac{1}{  
	\sqrt{ 2 \pi } 
	e^{-\gamma t } (1 - e^{-\gamma t } )
	}
	\exp \left[
		- \frac{
			\left(
				y - \sqrt{2}(1-e^{-\gamma t}) x
			\right)^2
		}{
			2 e^{-\gamma t } (1 - e^{-\gamma t } )
		}
	\right],
	\\
	\tilde{x}(x;y(t);t)
	=
	e^{-\frac{\gamma t}{2} } x 
	+ \frac{y(t)}{\sqrt{2}} ,
	\label{hom_xtilde}
	\\
	\mu_0(dy)\hM_y^\dagger (t) \hM_y (t)
	=
	dy p(y|\hX_1)
	\label{povm_homo}
\end{gather}
where the arguments of $\tilde{x}(x;y)$ in \equref{hom_xtilde}
are the the measurement outcome ($y(t)/\sqrt{2}$ on the right-hand side)
and the remaining signal of the system ($e^{-\frac{\gamma t}{2}} x$ on the right-hand side),
in which the exponential decay factor describes
the system's relaxation to the vacuum state 
and the loss of the initial information contained in the system.
The POVM in \equref{povm_homo}
shows that the measurement outcome $y(t)$
contains unsharp information about the quadrature amplitude $\hX_1$
and that in the infinite-time limit $t\rightarrow \infty$
the measurement reduces to the sharp measurement of $\sqrt{2} \hX_1$.

Equation~(\ref{mxxm})
indicates that the condition~(\ref{gen_cond})
for Theorem~\ref{th_diag_conservation} is satisfied,
and we obtain 
the relative-entropy conservation law
\[
	D(p^Y_{\hrho} (\cdot;t) ||p^Y_{\hsigma} (\cdot;t) )
	=
	D_{X_1}(\hrho||\hsigma)
	-
	E_{\hrho}[ D_{X_1}(\hrho_{y(t)} (t)||\hsigma_{y(t)} (t)) ],
\]
where 
\begin{equation*}
	p^Y_{\hrho}(y;t) dy
	=
	\tr [\hrho \hM_y(t)^\dagger \hM_y(t)]\mu_0(dy)
\end{equation*}
is the probability distribution function
of the measurement outcome $y(t)$ 
which is computed from the POVM in \equref{povm_homo},
$\hrho_{y(t)}(t)$ and $\hsigma_{y(t)}(t)$
are the conditional density operators for given measurement outcome
$y(t)$,
and $D_{X_1}(\hrho||\hsigma)$
is the diagonal relative entropy of 
the quadrature amplitude operator
$\hX_1$.

On the other hand, from Eq.~(\ref{homo_q}) Ban's condition~(\ref{ban_gen_cond}) does not hold.
The difference between the Shannon entropies is evaluated to be
\begin{align}
	&H_{\hrho}(X)
	-
	E_{\hrho} [  H_{\hrho_y}(X) ]
	\notag 
	\\
	&=
	H_{\hrho}(X)
	+
	\int dx dy
	e^{-\gamma t /2}
	p(y|\tilde{x} (x;y)) p^X_{\hrho} (\tilde{x}(x;y))
	\ln \left(
		\frac{
			e^{-\gamma t /2}
			p(y|\tilde{x} (x;y)) 
			p^X_{\hrho} (\tilde{x}(x;y))
		}{
			p^Y_{\hrho} (y)
		}
	\right)
	\notag
	\\
	&=
	-\frac{ \gamma t }{2}
	+ I_{\hrho} (X:Y)
	\neq
	I_{\hrho} (X:Y),
	\label{homo_gt}
\end{align}
and Shannon-entropy conservation law does not hold.
The term $-\gamma t/2$ in \equref{homo_gt} again arises from the non-unit Jacobian of the transformation
$x \rightarrow \tilde{x}(x;y)$ as in Eq.~(\ref{qc_hineq}).

\subsection{Heterodyne measurement}
The heterodyne measurement simultaneously
measures the two non-commuting quadrature amplitudes
$\hX_1$ and $\hX_2$
in a destructive manner
as in the homodyne measurement.
One way of implementation
is to take a large detuning of the local oscillator
in the balanced homodyne setup.
Then the cosine and sine components of 
the
homodyne current give
the two quadrature amplitudes~\cite{CBO9780511813948}.

The measurement operator for the heterodyne
measurement in an infinitesimal time interval $dt$ is given by
\begin{equation}
	\hM(d\zeta(t) ;dt ) = \hI - \frac{\gamma}{2} \hn dt + \sqrt{\gamma} \ha d \zeta(t),
	\label{mop_het}
\end{equation}
where $d\zeta(t)$ is a complex variable
obeying the complex It\^{o} rules
\begin{equation}
	(d \zeta (t))^2 = (d \zeta^\ast (t))^2 = 0,
	\quad
	d \zeta (t) d \zeta^\ast (t) = dt.
	\label{ito_comp}
\end{equation}
As in the homodyne measurement,
we consider the time evolution in the interaction picture.
The reference measure $\mu_0$ for 
the measurement outcome $\zeta(\cdot)$ is 
the complex Wiener measure
in which real and imaginary parts of $ d\zeta(\cdot)$
are statistically independent Gaussian variables 
with zero mean and second order moments 
consistent with the complex It\^{o} rules in \equref{ito_comp}.

The stochastic evolution of the wave function is 
described by the following stochastic Schr\"{o}dinger equation
\begin{equation}
	\ket{\psi(t+dt)} = \hM(dt;d\zeta (t)) \ket{\psi(t)}.
	\label{sse_hete}
\end{equation}
The solution of \equref{sse_hete}
for the initial condition $\ket{\psi_0}$
at $t=0$
is given by~\cite{1355-5111-8-1-015}
\[
	\ket{\tilde{\psi}(t)} 
	=\hM_{y(t)}(t)
	\ket{\psi_0},
\]
where	
\begin{gather}
	\hM_{y(t)}(t)
	=
	e^{- \frac{\gamma t}{2} \hn} e^{y(t) \ha},
	\label{mop_sol_het}
	\\
	y(t) = \sqrt{\gamma} \int_0^t e^{- \frac{\gamma s}{2} } d\zeta(s).
	\label{y_hete}
\end{gather}
Here the measurement operator in \equref{mop_sol_het}
does not involve the $\ha^2$ term
unlike the case of the homodyne measurement
in \equref{homomop}
because $(d\zeta(t))^2$ vanishes in this case.

Let us evaluate the POVM for the measurement outcome
$y(t)$ in \equref{y_hete}.
From \equref{mop_sol_het},
we have
\begin{align}
	&\hM_{y(t)}^\dagger(t) \hM_{y(t)}(t)
	\notag 
	\\
	&=
	\mathscr{A} \left\{
		\exp \left[
		\gamma t - (e^{\gamma t} -1) \ha \ha^\dagger 
		+ e^{\gamma t} (y(t) \ha + y^\ast (t) \ha^\dagger) 
		- e^{\gamma t} |y(t)|^2 
		\right]
	   \right\} ,
	\label{povm1_het}
\end{align}
where $\mathscr{A}\{ f(\ha,\ha^\dagger) \}$
denotes the antinormal ordering
in which the annihilation operators are
placed  to the left of the creation operators.
To obtain the proper POVM for the measurement outcome $y(t)$,
we have to multiply the operator $\hM_{y(t)}^\dagger(t) \hM_{y(t)}(t)$
by the measure $\mu_0(dy(t))$ 
which is the measure for the reference complex Wiener measure.
In the complex Wiener measure,
the variable $y(t)$ in \equref{y_hete}
is a Gaussian variable with zero mean
and the second-order moments
\[
	E_0[y^2(t)] = 0,
	\quad
	E_0[|y(t)|^2] = 1-e^{-\gamma t}.
\]
Thus the reference measure $\mu_0(dy(t))$
is given by
\begin{equation}
	\mu_0(dy(t))
	=  \frac{  e^{ - \frac{ |y(t)|^2 }{ 1-e^{-\gamma t} }  }   }{ \pi (1-e^{-\gamma t})  } d^2 y(t),
	\label{refm_het}
\end{equation}
where $d^2y= d( \mathrm{Re}y) d (\mathrm{Im} y) $.
From Eqs.~(\ref{povm1_het}) and (\ref{refm_het}),
the POVM for $y(t)$ is given by
\begin{equation}
	d^2 y(t)
	\mathscr{A} \left\{
	p(y(t)|\ha,\ha^\dagger;t)
	\right\} ,
	\notag
\end{equation}
where
\begin{equation}
	p(y(t)|\alpha,\alpha^\ast ; t)
	=
	\frac{
	\exp \left[
	-\frac{
	\left|
	y(t) - (1-e^{-\gamma t}) \alpha^\ast
	\right|^2
	}{
	e^{-\gamma t} (1-e^{-\gamma t})
	}
	\right]
	}{
	\pi e^{-\gamma t} (1-e^{-\gamma t})
	} .
	\label{conp_het}
\end{equation}
The probability distribution of the outcome $y(t)$
when the system is prepared in $\hrho_0$ at $t=0$
is given by
\begin{equation}
	p^Y_{\hrho_0}(y;t)
	= \int d^2 \alpha
	p(y(t)|\alpha,\alpha^\ast ; t)
	Q_{\hrho_0} (\alpha , \alpha^\ast),
	\label{yprob_het}
\end{equation}
where $Q_{\hrho} (\alpha, \alpha^\ast) := \bra{\alpha}\hrho\ket{\alpha}/\pi$
is the Q-function~\cite{1940264,PhysRev.138.B274},
and $\ket{\alpha}$ is a coherent state~\cite{PhysRev.131.2766}
defined by
\begin{equation}
	\ket{\alpha} 
	= e^{\alpha \ha^\dagger - \alpha^\ast \ha} \ket{0}
	=
	e^{-\frac{|\alpha|^2}{2} }
	\sum_{n=0}^\infty \frac{\alpha^n}{\sqrt{n!}} \ket{n}.
	\notag 
\end{equation}
From \equref{conp_het},
in the infinite-time limit $t \rightarrow \infty$,
the probability distribution of outcomes in \equref{yprob_het}
reduces to $Q_{\hrho_0}(y^\ast,y)$.
Thus the heterodyne measurement actually measures 
the non-commuting quadrature amplitudes simultaneously
in the sense that the probability distribution of outcomes is 
the Q-function of the initial state~\cite{carmichael2008statistical}.

As a reference POVM, we take
\begin{equation}
	d^2 \alpha \hE_{\alpha} = \frac{d^2\alpha}{ \pi }
	\ket{\alpha} \bra{\alpha}
	\label{xpovm_het}
\end{equation}
which generates the Q-function of the density operator.
From Eqs.~(\ref{mop_sol_het}) and (\ref{xpovm_het}) we have
\begin{equation}
	\mu_0(dy)
	\hM_y^\dagger(t)
	\hE_{\alpha} 
	\hM_y(t)
	=
	d^2y(t)
	q(\alpha , \alpha^\ast;y)
	\hE_{\tilde{\alpha}(\alpha, y)},
	\label{cond1_het}
\end{equation}
where
\begin{gather}
	\tilde{\alpha}(\alpha, y) 
	= e^{-\frac{\gamma t}{2}} \alpha + y^\ast,
	\label{talpha_het}
	\\
	q(\alpha, \alpha^\ast;y)
	=
	e^{-\gamma t} 
	p(y|\tilde{\alpha}(\alpha;y) , \tilde{\alpha}^\ast (\alpha;y) ).
	\label{qay_het}
\end{gather}
Note that the inferred quadrature amplitude in \equref{talpha_het}
allows a similar interpretation given in the homodyne analysis. 

Equation~(\ref{cond1_het}) ensures the condition in \equref{gen_cond}.
From Eqs.~(\ref{conp_het}), (\ref{talpha_het}) and (\ref{qay_het}),
for an arbitrary smooth function $F(\alpha, \alpha^\ast)$,
we have
\begin{align}
	&\int d^2 \alpha q(\alpha,\alpha^\ast;y) 
	F( \tilde{\alpha}(\alpha ; y),
	\tilde{\alpha}^\ast(\alpha;y))
	\notag \\
	&=\int d^2 \tilde{\alpha} (e^{\frac{\gamma t}{2}})^2
	q(e^{\frac{\gamma t}{2}} (\tilde{\alpha}+y^\ast),
	e^{\frac{\gamma t}{2}} (\tilde{\alpha}^\ast +y);y)
	F(\tilde{\alpha}, \tilde{\alpha}^\ast)
	\notag \\
	&=\int d^2\alpha
	p(y|\alpha,\alpha^\ast ;t) F(\alpha, \alpha^\ast).
	\notag
\end{align}
Thus, the condition~(\ref{gen_cond2}) 
for Theorem~\ref{th_rent_conservation1} is satisfied
and
the relative-entropy conservation law
\begin{equation}
	 D(P_{\hrho_0}^Y (\cdot;t) || P_{\hsigma_0}^Y (\cdot;t) )
	 = D_{\mathrm{Q}} (\hrho_0 || \hsigma_0) 
	 -E_{\hrho_0}[D_{\mathrm{Q}} ( \hrho_{y(t)}  ||  \hsigma_{y(t)} )]
	 \notag 
\end{equation}
holds,
where $\hrho_{y(t)}$ and $\hsigma_{y(t)} $ are the conditional
density operators for a given measurement outcome $y(t)$
and $D_{\mathrm{Q}}(\hrho||\hsigma)$
is the Q-function relative entropy defined as
\begin{equation}
	D_{\mathrm{Q}} (\hrho||\hsigma) 
	= \int d^2 \alpha Q_{\hrho}(\alpha , \alpha^\ast)
	\ln \left(
	\frac{ Q_{\hrho}(\alpha , \alpha^\ast) }{ Q_{\hsigma}(\alpha , \alpha^\ast) }
	\right).
	\label{qrel_def}
\end{equation}
Since the Q-function has the complete 
quantum information about the quantum state,
the Q-function relative entropy in \equref{qrel_def} 
vanishes if and only if
$\hrho = \hsigma$,
which is not the case in the diagonal relative entropies in
the preceding examples.
Still the Q-function relative entropy is bounded from above by 
the quantum relative entropy 
$S(\hrho||\hsigma) := \tr[ \hrho (\ln \hrho -\ln \hsigma  )]$,
for the relative entropy of probability distributions on the 
measurement outcome of a POVM
is always smaller than the quantum relative entropy~\cite{hayashi2006quantum}.

Equation~(\ref{qay_het}) implies the violation of Ban's condition~(\ref{ban_gen_cond}).
The difference of the Shannon entropies is given by
\begin{align}
	&H_{\hrho}(Q)
	-
	E_{\hrho} [  H_{\hrho_y}(Q) ]
	\notag
	\\
	&=
	H_{\hrho}(Q)
	+
	\int d^2 \alpha d^2 y
	e^{-\gamma t}
	p(y|\tilde{\alpha} (\alpha;y)) Q_{\hrho} (\tilde{\alpha}(\alpha;y))
	\ln \left(
		\frac{
			e^{-\gamma t }
			p(y|\tilde{\alpha} (\alpha;y)) 
			Q_{\hrho} (\tilde{\alpha}(\alpha;y))
		}{
			p^Y_{\hrho} (y)
		}
	\right)
	\notag
	\\
	&=
	- \gamma t
	+ I_{\hrho} (Q:Y)
	\neq
	I_{\hrho} (Q:Y)
	\label{het_last}
\end{align}
and the Shnnon-entropy conservation does not hold.
Again the the term $- \gamma t$ in \equref{het_last} 
originates from the non-unite Jacobian of the transformation
$x \rightarrow \tilde{x} (x;y)$.

\section{Summary}
\label{sec:conclusion}
In this paper we have examined the information flow
in a general quantum measurement process $Y$
concerning the relative entropy of the two quantum states
with respect to a system's POVM $X$ of the system.
By assuming the classicality condition on $X$ and $Y$,
we have proved 
the relative-entropy conservation law
when $X$ is a general POVM (Theorem~\ref{th_rent_conservation1})
and when $X$ is a PVM (Theorem~\ref{th_diag_conservation}).
The classicality condition
can be interpreted as the existence of a sufficient statistic
in a joint successive measurement of $Y$ followed by $X$
such that the distribution of the statistic coincides with
that of $X$ for the pre-measurement state.
This condition may be interpreted as a classicality condition 
because there exists a classical statistical model
which generates all the relevant probability distributions of $X$ and $Y$.
We have also investigated the case 
in which the labels of the PVM $X$ and the measurement outcome of $Y$
are both discrete
and we have shown the equivalence between
the classicality condition
in Theorem~\ref{th_diag_conservation} 
and the relative-entropy conservation law
for arbitrary states (Theorem~\ref{th_diag_equi}).
We have applied the general theorems to some typical 
quantum measurements.
In the QND measurement, the relative-entropy conservation law
can be understood as a result of 
the classical Bayes' rule 
which is a mathematical expression 
of the modification of our knowledge
based on the outcome of the measurement.
In the destructive sharp measurement of two-level systems,
Ban's condition together with the Shannon-entropy conservation law
does not hold, 
while our relative-entropy conservation law does.
The next examples,
namely
photon-counting, quantum counter, balanced homodyne and heterodyne measurements,
are non-QND measurements on a single-mode photon field
and the measurement outcomes
convey information about
the photon number,
part of the photon number,
one and both quadrature amplitude(s),
respectively.
In spite of the destructive nature of the measurements,
the classicality condition is still satisfied and 
we have shown that 
the relative-entropy conservation laws hold 
for these measurements.
In the quantum counter model, we can take two kinds of POVMs 
of the system
satisfying the two relative-entropy conservation laws.
In the heterodyne measurement
$X$ is the POVM which
generates the Q-function
and is not an ordinary PVM,
reflecting the fact that 
the non-commuting observables
are measured simultaneously.
In the examples of quantum counter, homodyne and heterodyne measurements,
the Shannon-entropy conservation laws do not hold 
due to the non-unit Jacobian of the transformation
$x \rightarrow \tilde{x} (x;y)$.
These examples of non-conserving Shannon entropies suggest that
our approach to the information transfer of the system's observable
is applicable to a wider range of measurement models
than that based on the Shannon entropy.

\begin{acknowledgments}
This work was supported by
KAKENHI Grant No.~26287088 from the Japan Society for the Promotion of Science, 
and a Grant-in-Aid for Scientific Research 
on Innovation Areas ``Topological Quantum Phenomena'' (KAKENHI Grant No.~22103005),
and the Photon Frontier Network Program from MEXT of Japan.
Y. K. acknowledges support 
by Advanced Leading Graduate Course for Photon Science (ALPS)
at the University of Tokyo.
\end{acknowledgments}

\appendix
\section{Equivalent conditions for (\ref{ban_gen_cond}) when $X$ and $Y$ are discrete}
\label{sec:app_ban}
In this appendix we characterize the condition~(\ref{ban_gen_cond}) required by Ban
when the reference POVM $X$ is a discrete PVM and the measurement $Y$ is also discrete.
In this case, the condition~(\ref{ban_gen_cond}) is equivalent to the condition that 
if a pre-measurement state is an eigenstate of $X$, 
then the post-measurement state is another eigenstate of $X$
as shown in the following theorem.
\begin{theo}
\label{th_d_bancond}
\begin{enumerate}
Let $\mathcal{E}^Y_y$ be a CP instrument with discrete measurement outcome $y$
and $\hE^X_x = \ket{x} \bra{x}$
satisfying the assumption~(\ref{gen_cond}) of Theorem~\ref{th_diag_conservation}.
Then the following conditions are equivalent:
\item \label{enum:d_ban}
Ban's condition~(\ref{ban_gen_cond}) holds,
i.e.
$q(x;y) = p(y|\tilde{x} (x;y))$.
\item \label{enum:d_delta}
For all $x$ and $y$ such that $p(y|x) \neq 0$,
\begin{equation}
	\sum_{x^\prime} \delta_{x, \tilde{x} (x^\prime ; y) } = 1.
	\label{d_delta}
\end{equation}
\item \label{enum:d_uni}
For all $x$ and $y$ such that $p(y|x) \neq 0$,
there exists a unique $x^\prime$ such that
$x = \tilde{x}(x^\prime;y)$.
\item \label{enum:d_r}
The post-measurement state is an eigenstate of $X$ if the pre-measurement state is an eigenstate.
Namely, for all $x$ and $y$,
there exist functions $\bar{x} (x;y)$ and $r(x;y) \geq 0$
such that
\begin{equation}
	 \mathcal{E}^Y_y  ( \ket{x} \bra{x} )
	= r(x;y) \ket{\bar{x}(x;y)} \bra{\bar{x}(x;y)}.
	\label{d_rcond}
\end{equation}
\end{enumerate}
\end{theo}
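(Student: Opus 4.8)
The plan is to establish the four equivalences through a short cycle of implications, using two ingredients already in hand: the explicit solution $p(y|x)=\sum_{x'}\delta_{x,\tilde{x}(x';y)}q(x';y)$ for the coarse-graining kernel obtained in Eq.~(\ref{diag_pyx}) in the proof of Theorem~\ref{th_diag_conservation}, and a Kraus decomposition $\mathcal{E}^Y_y(\hrho)=\sum_k\hM_{yk}\hrho\hM_{yk}^\dagger$ of the instrument. Since Eq.~(\ref{d_delta}) merely counts the preimages of $x$ under $x'\mapsto\tilde{x}(x';y)$, conditions \ref{enum:d_delta} and \ref{enum:d_uni} are literal restatements of each other, so it suffices to connect \ref{enum:d_ban}, \ref{enum:d_delta} and \ref{enum:d_r}.

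First I would prove \ref{enum:d_ban} $\Leftrightarrow$ \ref{enum:d_delta} by a counting argument built on Eq.~(\ref{diag_pyx}). Fix $x,y$ with $p(y|x)\neq0$. For the forward direction, Ban's relation $q(x';y)=p(y|\tilde{x}(x';y))$ assigns the common value $p(y|x)$ to every $x'$ in the fiber $\{x':\tilde{x}(x';y)=x\}$; substituting into Eq.~(\ref{diag_pyx}) gives $p(y|x)=N_x\,p(y|x)$, where $N_x$ is the number of such $x'$, hence $N_x=1$, which is Eq.~(\ref{d_delta}). Conversely, writing $v=\tilde{x}(x;y)$ and splitting into the cases $p(y|v)\neq0$ and $p(y|v)=0$, uniqueness of the fiber over $v$ (when nonempty) collapses the sum in Eq.~(\ref{diag_pyx}) to the single term $q(x;y)$, while in the vanishing case nonnegativity of the summands forces $q(x;y)=0$; either way $q(x;y)=p(y|v)$, i.e. \ref{enum:d_ban}.

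For the link with \ref{enum:d_r} I would exploit the rigidity that the PVM structure imposes on the Kraus operators. Substituting the Kraus form into the left side of Eq.~(\ref{gen_cond}) rewrites it as $\sum_k(\hM_{yk}^\dagger\ket{x})(\hM_{yk}^\dagger\ket{x})^\dagger=q(x;y)\ket{\tilde{x}(x;y)}\bra{\tilde{x}(x;y)}$; since the right-hand side has rank at most one, every $\hM_{yk}^\dagger\ket{x}$ is proportional to $\ket{\tilde{x}(x;y)}$, so $\hM_{yk}^\dagger\ket{x}=c_{ykx}\ket{\tilde{x}(x;y)}$ with $\sum_k|c_{ykx}|^2=q(x;y)$. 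Equivalently each $\hM_{yk}$ is ``monomial'' in the $X$-basis, its only nonzero entry in row $x$ sitting in column $\tilde{x}(x;y)$. Consequently $\hM_{yk}\ket{a}$ is supported on the fiber $\{b:\tilde{x}(b;y)=a\}$, and so is $\mathcal{E}^Y_y(\ket{a}\bra{a})$. If \ref{enum:d_uni} holds this fiber reduces to a single $\ket{b}$ whenever $p(y|a)\neq0$, making the image an automatically scaled eigenprojector and giving \ref{enum:d_r}. Conversely, reading off the diagonal of $\mathcal{E}^Y_y(\ket{a}\bra{a})=r(a;y)\ket{\bar{x}(a;y)}\bra{\bar{x}(a;y)}$ yields $q(b;y)=r(a;y)\delta_{b,\bar{x}(a;y)}$ for $b$ in the fiber, which pins the fiber down and returns \ref{enum:d_ban}/\ref{enum:d_uni}.

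The step I expect to be delicate is precisely this last reconciliation of the forward data $(\bar{x},r)$ with the backward data $(\tilde{x},q)$. The forward map only ``sees'' indices with $q(x;y)>0$, whereas condition \ref{enum:d_delta} counts the entire fiber, including indices where $q(x;y)=0$ and $\tilde{x}(x;y)$ is left undetermined by Eq.~(\ref{gen_cond}). The main work is therefore to argue that such degenerate indices can be normalized away — for instance by agreeing that $\tilde{x}(x;y)$ is assigned consistently, as it is in all the explicit models of Sec.~\ref{sec:examples}, where $x\mapsto\tilde{x}(x;y)$ is a genuine injection — so that the fibers relevant to \ref{enum:d_r} coincide with those counted in \ref{enum:d_delta} and the cycle \ref{enum:d_ban} $\Rightarrow$ \ref{enum:d_delta} $\Leftrightarrow$ \ref{enum:d_uni} $\Rightarrow$ \ref{enum:d_r} $\Rightarrow$ \ref{enum:d_ban} closes.
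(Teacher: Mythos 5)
Your proof is correct and follows essentially the same route as the paper: the cycle \ref{enum:d_ban} $\Rightarrow$ \ref{enum:d_delta} $\Leftrightarrow$ \ref{enum:d_uni} $\Rightarrow$ \ref{enum:d_r} $\Rightarrow$ \ref{enum:d_ban}, built on the counting identity \equref{diag_pyx}, the rank-one rigidity $\hM_{yz}^\dagger \ket{x} = a(x;y,z)\ket{\tilde{x}(x;y)}$ of the Kraus operators, and the readout of matrix elements of $\mathcal{E}^Y_y(\ket{x}\bra{x})$. The delicate point you flag at the end is settled in the paper exactly by the normalization you propose: a convention stated just before its proof declares $\tilde{x}(x;y) := \emptyset$ (a symbol outside the label space of $X$) whenever $q(x;y)=0$, together with $p(y|\emptyset) := 0$, so that Ban's condition holds trivially on the degenerate indices and the cycle closes.
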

Before proving this theorem, we make a comment on the arbitrariness of the definition of $\tilde{x}(x;y)$
when $q(x;y)=0$.
In this case, $\tilde{x}(x;y)$ may take any value and we define it as
$\emptyset$, which is out of the range of label space of $X.$
We also define $p(y|\emptyset) = 0$ for any $y$.
\begin{proof}
$\ref{enum:d_ban} \Rightarrow \ref{enum:d_delta}$:
We first note that $p(y|x)$ in this case is given by
\equref{diag_pyx}.
By substituting $q(x^\prime ;y)= p(y|\tilde{x} (x^\prime;y))$ into \equref{diag_pyx},
we obtain
\begin{align}
	p(y|x) 
	= \sum_{x^\prime} \delta_{x,\tilde{x} (x^\prime;y)} p( y | \tilde{x}(x^\prime;y))
	= \left( \sum_{x^\prime} \delta_{x,\tilde{x} (x^\prime;y)} \right)
	p(y|x).
	\notag
\end{align}
Therefore Eq.~(\ref{d_delta}) holds whenever $p(y|x)\neq 0.$

The condition $\ref{enum:d_uni}$
immediately follows from \ref{enum:d_delta} by noting the definition of the Kronecker's delta.

$ \ref{enum:d_uni} \Rightarrow \ref{enum:d_r} $:
From Eq.~(\ref{gen_eydef}),
\begin{align}
	p(y|x) 
	= \tr \left[ \ket{x} \bra{x} {\mathcal{E}^Y_y}^\dagger (\hI) \right]
	= \tr \left[ \mathcal{E}^Y_y ( \ket{x} \bra{x} )  \right] .
	\label{d_pyx_eyx}
\end{align}
If $p(y|x) = 0,$
from Eq.~(\ref{d_pyx_eyx}) and 
the positivity of $\mathcal{E}^Y_y ( \ket{x} \bra{x} )$,
$\mathcal{E}^Y_y ( \ket{x} \bra{x} ) = 0$ and the condition~\ref{enum:d_r}
hold.
Let us consider the case in which $p(y|x)\neq 0$.
Since ${\mathcal{E}^Y_y}^\dagger$ is a CP map,
it has the following Kraus representation~\cite{Kraus1971311}
\begin{align}
	{\mathcal{E}^Y_y}^\dagger (\hA)
	=\sum_z \hM_{yz}^\dagger \hA \hM_{yz}.
	\label{app_kraus}
\end{align}
From Eq.~(\ref{gen_cond}),
we have
\begin{align}
	\sum_z \hM_{yz}^\dagger \ket{x} \bra{x} \hM_{yz}
	= q(x;y) \ket{\tilde{x} (x;y)} \bra{\tilde{x} (x;y)}.
	\notag
\end{align}
Therefore we can put
\begin{align}
	\hM_{yz}^\dagger \ket{x} 
	= a(x;y,z) \ket{ \tilde{x} (x;y)},
	\label{mdxax}
\end{align}
where
\begin{align}
	\sum_z |a(x;y,z)|^2 = q(x;y).
	\label{app_a2q}
\end{align}
From Eqs.~(\ref{app_kraus}) and (\ref{mdxax}) 
we obtain
\begin{align}
	{ \mathcal{E}^Y_y}^\dagger  ( \ket{x^{\prime \prime} }  \bra{x^{\prime}}  ) 
	&= \sum_z \hM_{yz}^\dagger \ket{x^{\prime \prime} }  \bra{x^{\prime}} \hM_{yz}
	\notag
	\\
	&= \left( \sum_{z}  a (x^{\prime \prime} ;y,z)  a^\ast (x^{\prime} ;y,z)  \right)
	\ket{ \tilde{x} (x^{\prime \prime} ;y) } \bra{ \tilde{x} (x^{\prime} ;y) }.
	\label{d_tochusiki1}
\end{align}
The matrix element of
$\mathcal{E}^Y_y (\ket{x} \bra{x})$
is evaluated as
\begin{align}
	\bra{x^\prime} \mathcal{E}^Y_y (\ket{x} \bra{x}) \ket{x^{\prime\prime}}
	&=
	\tr \left[ \mathcal{E}^Y_y (\ket{x} \bra{x})  \ket{x^{\prime\prime}}  \bra{x^\prime}   \right]
	\notag
	\\
	&=
	\tr \left[ \ket{x} \bra{x}  { \mathcal{E}^Y_y }^\dagger ( \ket{x^{\prime\prime}}  \bra{x^\prime}  ) \right]
	\notag
	\\
	&=
	\left( \sum_{z}  a (x^{\prime \prime} ;y,z)  a^\ast (x^{\prime} ;y,z)  \right)
	\delta_{x, \tilde{x} (x^{\prime \prime} ;y)}
	\delta_{x, \tilde{x} (x^{\prime } ;y)},
	\label{d_tochusiki2}
\end{align}
where we used \equref{d_tochusiki1} in the last equality.
From the condition~\ref{enum:d_uni}, 
there exists a unique $x^\prime$ such that
$x= \tilde{x} (x^{\prime} ;y)$
and we write this $x^\prime$ as
$\bar{x} (x;y)$.
Then Eq.~(\ref{d_tochusiki2}) becomes
\begin{align}
	\left( \sum_{z}  |  a (x^{\prime} ;y,z) |^2  \right)
	\delta_{x^\prime, \bar{x}(x;y)}
	\delta_{x^{\prime \prime} , \bar{x}(x;y)}
	= q(x^\prime ; y)
	\delta_{x^\prime, \bar{x}(x;y)}
	\delta_{x^{\prime \prime} , \bar{x}(x;y)},
	\label{app_qdd}
\end{align}
where we used \equref{app_a2q}.
Equation~(\ref{app_qdd}) implies
\begin{equation}
	\mathcal{E}^Y_y (\ket{x} \bra{x})
	=q (\bar{x} (x;y);y) \ket{ \bar{x} (x;y)} \bra{\bar{x} (x;y)},
	\notag
\end{equation}
which is nothing but the condition~\ref{enum:d_r}.

$\ref{enum:d_r}   \Rightarrow \ref{enum:d_ban}:$
From 
\[
	\hE^Y_y =  { \mathcal{E}^Y_y}^\dagger (\hI)
	=\sum_x p(y|x)\ket{x} \bra{x}
\] 
and
Eq.~(\ref{d_rcond}), we have
\begin{align}
	p(y|x)
	&= \tr[ \ket{x} \bra{x}  {\mathcal{E}^Y_y}^\dagger (\hI)  ]
	\notag
	\\
	&= \tr  [  \mathcal{E}^Y_y (  \ket{x} \bra{x} )   ]
	\notag
	\\
	&= r(x;y).
	\label{d_peqr}
\end{align}
From Eqs.~(\ref{gen_cond}), (\ref{d_rcond}) and (\ref{d_peqr}), we obtain
\begin{align}
	q(x;y) 
	&=
	\tr \left[
	\ket{\tilde{x} (x;y)} \bra{\tilde{x} (x;y)} 
	{\mathcal{E}^Y_y}^\dagger ( \ket{x} \bra{x} )
	\right]
	\notag
	\\
	&=
	\tr \left[
	\mathcal{E}^Y_y (
	\ket{\tilde{x} (x;y)} \bra{\tilde{x} (x;y)} 
	)
	\ket{x} \bra{x} 
	\right]
	\notag
	\\
	&=
	p(y| \tilde{x} (x;y)) \delta_{x, \bar{x} (\tilde{x} (x;y);y)}.
	\label{daplast}
\end{align}
When $q(x;y) \neq 0$, \equref{daplast} implies $q(x;y) = p(y| \tilde{x} (x;y))$.
If $q(x;y) = 0$,
$\tilde{x}(x;y) = \emptyset$ and $p(y|\emptyset) = q(x;y)=0$
from the remark above the present proof.
Thus the condition~(\ref{ban_gen_cond}) holds.
\end{proof}

We briefly remark on the case when the PVM $\ket{x}\bra{x}$ is continuous
with the complete orthonormal condition~(\ref{cons2}).
Under the same assumptions of Theorem~\ref{th_d_bancond},
we can show that Ban's condition~(\ref{ban_gen_cond})
implies
\begin{equation}
	\int dx^\prime 
	\delta  ( x - \tilde{x} (x^\prime;y) )
	= 1
	\label{c_delta}
\end{equation}
for any $x$ and $y$ such that $p(y|x) \neq 0.$
The proof of Eq.~(\ref{c_delta}) is formally as the same as
that of $\ref{enum:d_ban} \Rightarrow \ref{enum:d_delta}$
in Theorem~\ref{th_d_bancond}.
However, the formal correspondence between continous and discrete $X$ fails when
we consider the other part of the proof of Theorem~\ref{th_d_bancond}.
For example, we cannot conclude from \equref{c_delta} 
the existence and uniqueness of $x^\prime$ such that $\tilde{x}(x^\prime;y)= x$.
For simplicity let us assume the uniqueness of $x^\prime$ holds.
Still the condition~(\ref{c_delta}) is very restrictive since it implies
\begin{equation}
	\left|
	\frac{ \partial \tilde{x} (x^\prime ;y) }{  \partial x^\prime}
	\right|
	=1,
	\notag
\end{equation}
i.e. the Jacobian of the transformation $x \rightarrow \tilde{x}(x;y)$ should be 1.
This reflects the strong dependence of the Shannon entropy on the reference measure,
which is not the case in the relative entropy.

\section{Derivations of Eqs.~(\ref{mxxm}) and (\ref{povm_homo})}
\label{sec:app1}
To evaluate the operator
$\hM_{y(t)}^\dagger (t) \ket{x}_1 {}_1
\bra{x} \hM_{y(t)}(t)$,
we utilize the technique of normal ordering. 
We first note that 
the normally ordered expression
$:O(\ha , \ha^\dagger):$
of an operator $\hO$,
in which the annihilation operators are
placed to the right of the creation operators,
is given by a coherent-state expectation as
\begin{equation*}
	O(\alpha, \alpha^\ast) 
	= \bra{\alpha}  \hO  \ket{\alpha}.
\end{equation*}
Since the coherent state $\ket{\alpha}$ 
in the $\ket{x}_1$ representation is given by
\begin{equation*}
	{}_1\braket{x}{\alpha} 
	= \pi^{-1/4}
	\exp\left[
	-\frac{ 1 }{ 2 } (x- \sqrt{2} \alpha)^2
	-\frac{ 1 }{ 2 } (\alpha^2 + |\alpha|^2)
	\right],
\end{equation*}
we have 
\[
	\braket{\alpha}{x}_1 {}_1 \braket{x}{\alpha}
	= \pi^{-1/2}
	\exp \left[
	- 
	\left(  x -  \frac{ \alpha + \alpha^\ast }{\sqrt{2}}  \right)^2
	\right] ,
\]
which implies the following normally ordered expression
\begin{equation}
	\ket{x}_1 {}_1 \bra{x}
	= \pi^{-1/2}
	:
	\exp \left[
	- 
	\left(  x -  \frac{ \ha + \ha^\dagger}{\sqrt{2}}  \right)^2
	\right]
	: .
	\label{x_normal}	
\end{equation}
By using \equref{x_normal}
and the formula
\[
	e^{-\lambda \hn} \ket{ \alpha }
	=
	e^{ - \frac{|\alpha|^2 }{2} ( 1 - e^{ - 2\lambda } )  }
	\ket{  e^{-\lambda} \alpha},
\]
which is valid for real $\lambda$,
the expectation of
the operator $\hM_y^\dagger(t) \ket{x}_1 {}_1\bra{x} \hM_y(t)$
over the coherent state $\ket{\alpha}$ is evaluated to be 
\begin{align}
	&\bra{\alpha} 
	\hM_{y(t)}^\dagger (t) 
	\ket{x}_1
	{}_1
	\bra{x}
	 \hM_{y(t)}(t)
	\ket{\alpha}
	\notag \\
	&=
	\pi^{-1/2}
	\exp \left[
	-  \left(
		e^{-\frac{\gamma t}{2} } x 
		+ \frac{y(t)}{\sqrt{2}}
		- \frac{ \alpha + \alpha^\ast}{\sqrt{2}}
	   \right)^2
	   + \left(
	   e^{-\frac{\gamma t}{2} } x 
		+ \frac{y(t)}{\sqrt{2}}
		\right)^2
	  -x^2
	\right].
	\label{temp2_homo}
\end{align}
Substituting \equref{x_normal} 
in \equref{temp2_homo},
we obtain \equref{mxxm}.
By integrating \equref{mxxm}
with respect to $x$
and noting a relation
\[
	f(\hX_1) = \int dx f(x) \ket{x}_1 {}_1 \bra{x},
\]
which is valid for an arbitrary function $f(x)$,
we obtain
\begin{equation}
	\hM_y(t)^\dagger \hM_y(t)
	=
	\exp \left[
	\frac{\gamma t}{2}
	+\hX_1^2
	-e^{\gamma t}
	\left(\hX_1 -\frac{y}{\sqrt{2}}  \right)^2
	\right].
	\label{mymy_app1}
\end{equation}
To evaluate the proper POVM for the outcome $y$,
we need to multiply 
$\hM_y^\dagger(t) \hM_y(t)$ by $\mu_0(dy(t))$,
where $\mu_0(dy(t))$ is the probability measure
of $y(t)$,
provided that $\xi(\cdot)$ obeys a Wiener distribution.
Here $y(t)$ in \equref{y_homo}
under a Wiener measure $\mu_0$
is a Gaussian stochastic variable with the first and second moments
\[
	E_0[ y(t)] = 0,
\]
\[
	E_0[ y^2(t)]
	=
	\gamma \int_0^t e^{-\gamma s} ds 
	= 1 - e^{-\gamma t} ,
\]
where $E_0[\cdot]$ denotes the expectation
with respect to the Wiener measure.
Thus $\mu_0(dy(t))$ is given by
\begin{equation}
	\frac{dy}{\sqrt{2\pi (1-e^{- \gamma t})}}
	\exp \left[
	- \frac{
	y^2
	}{
	2 (1-e^{- \gamma t}) 
	}
	\right].
	\label{y_wiener}
\end{equation}
Multiplying Eq.~(\ref{mymy_app1})
by \equref{y_wiener},
we obtain \equref{povm_homo}.

%

\end{document}